\newif\ifaap
\shorttitle{A Markov approach to credit rating migration conditional on economic
  states}
\providecommand{\citet}{\cite}
\providecommand{\citep}{\cite}
\definecolor{BrickRed}{rgb}{.625,.25,.25}
\definecolor{markergreen}{rgb}{0.6, 1.0, 0}
\definecolor{darkgreen}{rgb}{0, .5, 0}
\definecolor{darkred}{rgb}{.7,0,0}
\theoremstyle{plain}
\newtheorem{theorem}{Theorem}[section]
\newtheorem{proposition}[theorem]{Proposition}
\newtheorem{corollary}[theorem]{Corollary} 
\newtheorem{lemma}[theorem]{Lemma} 
\theoremstyle{definition} 
\newtheorem{definition}[theorem]{Definition}
\newtheorem{assumption}{Assumption}
\providecommand{\R}{{\mathbb{R}}}
\newcommand{\p}{{\mathbb P}}
\newcommand{\e}{{\bf e}}
\providecommand{\N}{{\mathbb N}}
\providecommand{\Nzero}{\mathbb N_0}
\providecommand{\Ncdf}{{\rm N}}
\begin{document}

\title{
  \ifaap\else\Large\bf\fi
  A Markov approach to credit rating migration\\ conditional on economic
  states 
} %

\ifaap
  \authorone[Deutsche Bank]{Michael Kalkbrener}
  \authortwo[Berlin School of Economics and Law]{Natalie Packham}
  \addressone{Deutsche Bank AG, Otto-Suhr-Allee 16, 10585 Berlin, Germany}
  \emailone{michael.kalkbrener@db.com}
  \addresstwo{Berlin School of Economics and Law, Badensche Str.\ 52, 10825 Berlin, Germany}
  \emailtwo{packham@hwr-berlin.de}
\else %
\renewcommand{\thefootnote}{\fnsymbol{footnote}} %
\author{Michael Kalkbrener\footnotemark,\ \  Natalie Packham\footnotemark} %
\date{\today\medskip
}
\maketitle %
\fi

\begin{abstract}
  We develop a model for credit rating migration that accounts for the 
  impact of economic state fluctuations on default probabilities. 
  The joint process for the economic state and the rating is modelled
  as a time-homogeneous Markov chain. While the rating process itself
  possesses the Markov property only under restrictive conditions,
  methods from Markov theory can be used to derive the
  rating process' asymptotic behaviour. We use the mathematical
  framework to formalise and analyse different rating philosophies,
  such as point-in-time (PIT) and through-the-cycle (TTC) ratings.
  Furthermore, we introduce stochastic orders on the bivariate process'
  transition matrix to establish a consistent notion of ``better'' and
  ``worse'' ratings. Finally, the construction of PIT and TTC ratings
  is illustrated on a Merton-type firm-value process.
\end{abstract}

\ifaap
\ams{60J20; 91G40}{60J05; 60E15}
\keywords{Risk management, credit risk, credit ratings, Markov chains}
\else
\noindent Keywords: Risk management, credit risk, credit ratings,
Markov chains \medskip

\noindent MSC classification: 60J20, 91G40, 60J05, 60E15\footnotetext[1]{%
  Michael Kalkbrener, Deutsche Bank AG, Otto-Suhr-Allee 16,
 10585 Berlin, Germany.}

\footnotetext[2]{  
  Natalie Packham, Berlin School of Economics and Law,
  Badensche Str.\ 52, 10825 Berlin, Germany. 

\ \ Email:
  packham@hwr-berlin.de \smallskip
  
\noindent
  The views expressed in this paper are those of the authors
  and do not necessarily reflect the position of Deutsche Bank
  AG.\medskip} %
\fi
\renewcommand{\thefootnote}{\arabic{footnote}} %


\section{Introduction}
\label{sec:introduction}

It is
generally accepted that corporate default rates depend on 
economic conditions 
\citep{Wilson1998,Nickell2000,Bangia2002,Heitfield2005,Malik2012}. Rating
systems, which are designed to provide information on the credit 
quality of corporates, differ with respect to the type of economic
information reflected in the rating. The two main rating philosophies
are {\em through-the-cycle (TTC) ratings}, which are insensitive to
economic conditions and focus on company-specific criteria, and {\em
  point-in-time (PIT) ratings}, which reflect the current economic
situation as well as expectations of future economic developments. 
However, no precise definition of PIT and TTC characteristics
of rating systems has been established. One objective of this paper is
to develop a mathematical 
framework for rating migration processes, factoring in the
economic situation, that allows for a formal characterisation
of PIT and TTC ratings. 

The distinction between PIT and TTC rating concepts is particularly
relevant for banks who need to comply with accounting standards and capital regulation.
Assessing the credit risk of a company solely by its probability
  of default (PD), incentivises procyclical adjustments of financial
  institutions' capital requirements, i.e., decreasing default
rates in economic upturns would allow financial institutions to reduce
capital for their corporate credit portfolios, while increasing default rates in economic downturns
would increase capital requirements, see e.g.\
\citet{Borio2001,Behn2016} and references therein. As this procyclical
capitalisation poses a threat to financial stability, financial
regulators prefer banks to assess regulatory capital requirements for
credit risk using TTC ratings and default probabilities,\footnote{Capital requirements
  regulation and directive -- CRR/CRD IV, Article 502} which are
designed to be stable through the business cycle. In contrast, 
the International Financial Reporting Standard (IFRS) 9 as well as the
accounting standard Current Expected Credit 
Losses (CECL) require financial institutions to calculate expected
credit losses that reflect current economic conditions and forecasts
of future economic conditions,\footnote{International Financial
  Reporting Standard (IFRS) 9, Paragraph 5.5.17.
  For CECL, see FASB Accounting Standards Update (ASU)
  2016-13.} leading to
the challenging problem of incorporating the impact of economic
factors into the estimation of default probabilities. 

Models of rating migration processes specified as
time-homogeneous Markov chains, as is common in the credit risk
literature \citep{Duffie2003,Jarrow1997,Lando2004,Bluhm2003}, lack the
necessary flexibility to capture the complex link to the macroeconomic
environment. To resolve this problem, several extensions of the simple
Markov framework have been proposed in the literature: 
regime-switching models specify different rating migration matrices
for periods of economic contraction and expansion \citep{Bangia2002},
mixtures of Markov chain models define rating migration processes as a
combination of several Markov chains \citep{Frydman2008,Fei2012}.

The model presented in this paper extends the above-mentioned
approaches by incorporating an economic state process with finitely
many states. Our starting point is to model the evolution of economic
states $\mathcal A$ as a time-homogeneous Markov chain $(A_t)_{t\in 
  \Nzero}$. Rating migrations $(R_t)_{t\in \Nzero}$, 
specified on the state space of rating classes $\mathcal R$, are not
assumed to follow a Markov process but are defined by transition
matrices that depend on the state of the economy. However, it is a key
assumption that the rating process has the Markov
property if its state space is extended to the product space $\mathcal
A\times \mathcal R$ consisting of economic states and rating classes. 
More formally, the combined process $(X_t)_{t\in
  \Nzero}=(A_t,R_t)_{t\in \Nzero}$ defined on the product space
$\mathcal A\times \mathcal R$ of economic states and ratings is
assumed to be a time-homogeneous Markov chain.

In the first part of the paper (Sections
\ref{sec:defin-cond-rating}--\ref{sec:asympt-prop}), we study
conditions on the conditional 
transition matrices that guarantee the Markov property for $(R_t)_{t\in
  \Nzero}$ within the state space $\mathcal R$ itself. Since
$(R_t)_{t\in \Nzero}$ is a projection of the time-homogeneous Markov
chain $(X_t)_{t\in \Nzero}$, this problem is a special case of the more
general question of whether functions of Markov chains possess the Markov
property \citep{Rosenblatt1971}. We show that only a small
subclass of TTC rating migration processes are (time-inhomogeneous)
Markov chains on the state space $\mathcal R$, which provides
theoretical support to the claim that real-world rating systems do not
have the Markov property, e.g. \citet{Altman1998,Lando2002,Guettler2010}.
However, it can be 
shown that for any non-Markovian rating process $(R_t)_{t\in \Nzero}$
there exists a time-inhomogeneous Markov chain $({\widetilde
  R}_t)_{t\in \Nzero}$on $\mathcal R$ with time-dependent transition
matrices $({\widetilde P}_t)_{t\in {\mathbb N}}$ that has identical
default rates and future rating distributions as $(R_t)_{t\in
  \Nzero}$, i.e., \[  \p({\widetilde R}_t=r) = \p(R_t=r), \quad t\in
  \Nzero, \quad r\in{\mathcal R}, \] where $\p(R_t=r)$ denotes the
probability of $R_t$ being in rating class $r$. 
Using the Perron-Frobenius Theorem, we prove that the sequence of
transition matrices $({\widetilde P}_t)_{t\in {\mathbb N}}$ converges
to a limit, denoted by ${\widetilde P}_{\infty}$. The corresponding
time-homogeneous Markov chain on $\mathcal R$ with transition matrix
${\widetilde P}_{\infty}$ replicates the rating distribution and
default behaviour of $(R_t)_{t\in \Nzero}$ in the asymptotic limit. In
particular, ${\widetilde P}_{\infty}$ specifies the vector of
asymptotic default probabilities of the different rating classes. 

The explicit incorporation of an economic state process into the
specification of the rating process $(R_t)_{t\in \Nzero}$ allows for a
unified analysis of different rating philosophies. In the second part
of the paper (Section \ref{sec:rating-philosophies}), we provide a
formal definition of PIT and TTC rating 
migration processes based on a decomposition of conditional transition
matrices into default and non-default components. In a nutshell, TTC
ratings have a firm-specific character, i.e., the non-default rating
transitions are insensitive to economic state fluctuations, whereas
PIT non-default
ratings are explicitly assigned according to a firm's PD (probability of
default), whence they swing with economic state fluctuations. 

The third part of the paper (Section \ref{sec:ordered-ratings}) is 
concerned with orderings on distributions on the state space $\mathcal
A\times \mathcal R$ of the combined process $(X_t)_{t\in \Nzero}$,
formalising the 
notions of ``better'' or ``worse'' economic states and rating
states. A natural approach to order rating states is by
first-order stochastic dominance, see \citet{Jarrow1997}, which,
amongst other properties, preserves the order of long-term default
probabilities. Several extensions of first-order stochastic
dominance exist in a multivariate setting, amongst which stochastic
dominance and the upper orthant order turn out to be relevant for the
combined process $(X_t)_{t\in \Nzero}$. We analyse conditions for stochastic
monotonicity of transition matrices of $(X_t)_{t\in \Nzero}$
as well as conditions for preserving the order of multi-year default
probabilities conditional on the initial economic state and rating.

In the fourth part (Section \ref{sec:stylised-example}), we
apply the theoretical framework to create PIT and TTC ratings using a
Merton-type firm-value process \citep{Merton1974}. The economic state
influences the firm-value by adding a systematic drift. PIT ratings
are determined by classifying firms according to their PDs. While the
PIT rating construction is relatively straightforward, there are
various methods to build TTC ratings, such as considering a firm's
idiosyncratic risk or stressed PDs. In a worked example we select the
TTC rating that most closely aligns with the PIT rating by minimising
the differences in multi-year PDs conditional on various combinations
of initial economic states and PIT ratings. This provides valuable
insights into how PIT and TTC ratings coexist, as well as their
similarities and differences.

We conclude by demonstrating how rating models within our framework  
concurrently meet requirements of ratings and PDs in both accounting
and regulatory capital standards (Section \ref{sec:conclusion}). 

\section{Conditional rating transition matrices}
\label{sec:defin-cond-rating}

\subsection{Definition}
\label{sec:definition}

We refer to \citet{Norris1998} for basic definitions in the theory of 
Markov chains.
Let $(\Omega, \mathcal F, \p)$ be a probability space. On this
probability space there exists a stochastic process $(A_t)_{t\in \Nzero}$
taking values in a set 
$\mathcal A$, which consists of $K$ economic states. The economic
state process is assumed to be a time-homogeneous Markov chain with
transition matrix $M=(m_{ab})\in [0,1]^{K\times K}$, i.e.,
$\p(A_{t+1}=b|A_t=a)=m_{ab}$ for $t\in \Nzero$.

In addition, a stochastic process $(R_t)_{t\in \Nzero}$ denoting the
rating process of a firm exists, taking values in the set
$\mathcal R$. The set $\mathcal R$ has $J$ elements, which are
interpreted as rating states including an absorbing default state
$\overline{r}$.  
The subset of non-default ratings is denoted by $\widehat{\mathcal
  R}$, i.e., $\widehat{\mathcal R}={\mathcal
  R}\setminus\{\overline{r}\}$.  

We do not require the rating process $(R_t)_{t\in \Nzero}$ to have the
Markov property. Instead we make the weaker assumption that the joint
process $(X_t)_{t\in \Nzero}=(A_t,R_t)_{t\in \Nzero}$ taking values in
the product space $\mathcal X=\mathcal A\times \mathcal R$ is a
time-homogeneous Markov chain with transition matrix
\begin{equation}
P=(p_{(a,r),(b,s)})\in [0,1]^{K J\times K J},
\label{eq:4}
\end{equation}
where
\begin{equation*}
  p_{(a,r),(b,s)} =
  \p(A_{t+1}=b,R_{t+1}=s|A_t=a,R_t=r) 
\end{equation*}
denotes the probability of moving from the state $(a,r)$ to
$(b,s)$. The distribution of $X_0$ is denoted by $\lambda\in [0,1]^{K
  J}$. W.l.o.g.\ we assume a default probability of 0 at time 0, i.e.,
$\lambda_{(a,\overline{r})}=0$ for $a\in{\mathcal A}$.

It is a reasonable assumption that if the current economic state is
known, knowledge about the current rating does not add information
about the future economic state. The following Lemma establishes that,
under this assumption, the transition matrix $P$ is fully determined by 
\begin{enumerate}
\item the transition probabilities $m_{ab}$ of the economic state
  process and  
\item the rating transition probabilities conditional on economic
  state transitions, i.e., the conditional probabilities 
$\p(R_{t+1}=s|R_t=r, A_{t+1}=b, A_t=a)$. 
\end{enumerate}
To formalize this statement, conditional rating migration matrices 
$M^{(a,b)} = (m_{rs}^{(a,b)})\in [0,1]^{J\times J}$ are defined for
each economic state transition $(a,b)\in{\mathcal A}\times{\mathcal
  A}$, where 
\begin{equation*}
  m_{rs}^{(a,b)} = \p(R_{t+1}=s|R_t=r, A_{t+1}=b, A_t=a). 
\end{equation*}
We assume that migrations from non-default ratings to any other rating
class have positive probability, i.e., $m_{rs}^{(a,b)}>0$ for all
$r\in\widehat{\mathcal R}$, $s\in {\mathcal R}$ and $a,b\in \mathcal
A$. Since $\overline{r}$ is absorbing, $m_{\overline{r} r}^{(a,b)}=0$
and $m_{\overline{r}\, \overline{r}}^{(a,b)}=1$.

\begin{lemma}~\label{200125:l1}
$p_{(a,r),(b,s)}=m_{ab} m_{rs}^{(a,b)}$ if and only if $A_{t+1}$ and
$R_t$ are conditionally independent given $A_t$. 
\end{lemma}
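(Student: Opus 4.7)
The plan is to apply the chain rule for conditional probability to decompose $p_{(a,r),(b,s)}$ into two factors, one of which is $m^{(a,b)}_{rs}$ by definition, and then recognize the remaining factor as the quantity whose equality with $m_{ab}$ is exactly the stated conditional independence.

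First I would write
\begin{equation*}
  p_{(a,r),(b,s)} = \p(R_{t+1}=s \mid A_{t+1}=b, A_t=a, R_t=r)\cdot \p(A_{t+1}=b \mid A_t=a, R_t=r),
\end{equation*}
valid whenever $\p(A_t=a, R_t=r)>0$. By the definition of the conditional rating migration matrix, the first factor equals $m_{rs}^{(a,b)}$. Hence the identity $p_{(a,r),(b,s)} = m_{ab}\, m_{rs}^{(a,b)}$ is equivalent to
\begin{equation*}
  \p(A_{t+1}=b \mid A_t=a, R_t=r) = m_{ab} = \p(A_{t+1}=b \mid A_t=a),
\end{equation*}
which is precisely the statement that $A_{t+1}$ and $R_t$ are conditionally independent given $A_t$.

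To argue both directions cleanly: for the ``if'' direction, conditional independence gives $\p(A_{t+1}=b \mid A_t=a, R_t=r)=m_{ab}$, and multiplying by the first factor gives the claimed factorisation. For the ``only if'' direction, starting from $p_{(a,r),(b,s)}=m_{ab}\, m^{(a,b)}_{rs}$ and summing over $s\in \mathcal R$ yields
\begin{equation*}
  \p(A_{t+1}=b\mid A_t=a, R_t=r) \;=\; \sum_{s\in\mathcal R} p_{(a,r),(b,s)} \;=\; m_{ab}\sum_{s\in\mathcal R} m^{(a,b)}_{rs} \;=\; m_{ab},
\end{equation*}
using that each row of $M^{(a,b)}$ sums to one; comparing with the marginal transition $\p(A_{t+1}=b\mid A_t=a)=m_{ab}$ of the economic state chain gives conditional independence.

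There is essentially no obstacle here; the only subtlety is bookkeeping of the conditioning events (ensuring the conditional probabilities are well defined, which is covered by the standing assumption that $\lambda$ and the transition probabilities put positive mass on the relevant states $(a,r)$) and being careful that the ``only if'' direction uses the row-sum property of $M^{(a,b)}$ to eliminate $s$.
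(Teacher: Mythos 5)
Your proof is correct and follows essentially the same route as the paper: both rest on the chain-rule factorisation $p_{(a,r),(b,s)}=\p(A_{t+1}=b\mid A_t=a,R_t=r)\,\p(R_{t+1}=s\mid R_t=r,A_{t+1}=b,A_t=a)$ and the identification of the second factor with $m_{rs}^{(a,b)}$. Your ``only if'' direction, which sums over $s$ and uses the row-sum property of $M^{(a,b)}$, is in fact spelled out more carefully than the paper's terse ``follows in the same way.''
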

\begin{proof}
  ``$\Leftarrow$'': Using that
  $\p(A_{t+1}=b|A_t=a)=\p(A_{t+1}=b|A_t=a,R_t=r)$:
  \begin{align*}
    m_{ab} m_{rs}^{(a,b)} &= \p(A_{t+1}=b|A_t=a)\,
                            \p(R_{t+1}=s|R_t=r,A_{t+1}=b,A_t=a) \\
                          &= \p(A_{t+1}=b|A_t=a, R_t=r)\,
                            \p(R_{t+1}=s|R_t=r,A_{t+1}=b,A_t=a) \\
                          &= \p(A_{t+1}=b,R_{t+1}=s|R_t=r,A_t=a)\\
                          &= p_{(a,r),(b,s)}. 
  \end{align*} 
  ``$\Rightarrow$'' follows in the same way.
\end{proof}

From now on we assume that $A_{t+1}$ and $R_t$ are conditionally
independent given $A_t$. It immediately follows from $P=(m_{ab}
m_{rs}^{(a,b)})$ that the process $X_t$ as well as the rating process
$R_t$, which can be considered as the projection of $X_t$ onto the
state space ${\mathcal R}$, are completely specified by the initial
distribution $\lambda\in [0,1]^{K J}$ of $(A_0,R_0)$, the transition
matrix $M\in [0,1]^{K \times K}$ of $A_t$ and the $K^{2}$ conditional
rating migration matrices $M^{(a,b)}\in [0,1]^{J\times J}$. The
probability of a path of economic states and ratings has the form
\begin{equation*}
  \p(X_0=(a_0,r_0),\ldots,X_T=(a_T,r_T))=\p(X_0=(a_0,r_0))\prod_{t=0}^{T-1}
  m_{a_t a_{t+1}} m_{r_t r_{t+1}}^{(a_t,a_{t+1})}. 
\end{equation*}

\subsection{Example of a rating system}
\label{sec:example-rating-syst}

The following example shows the economic state transition matrix $M$ and the $M^{(a,b)}$ and $P$ matrices of a
rating system. It is derived from a Merton model (\cite{Merton1974}),
as will be explained in detail in Section \ref{sec:stylised-example}.

Economic state transitions are assumed to follow a Markov process with
transition matrix 
\begin{equation}
M = \begin{pmatrix}
  0.8 & 0.175 & 0.025\\
  0.1 & 0.8 & 0.1\\
  0.025 & 0.175 & 0.8
\end{pmatrix}. 
\end{equation}
Here, the first state is considered to be a good, the
second state a neutral and the third state a
bad economic state. 

\begin{figure}[t]
  \centering
  \includegraphics[width=.45\textwidth]{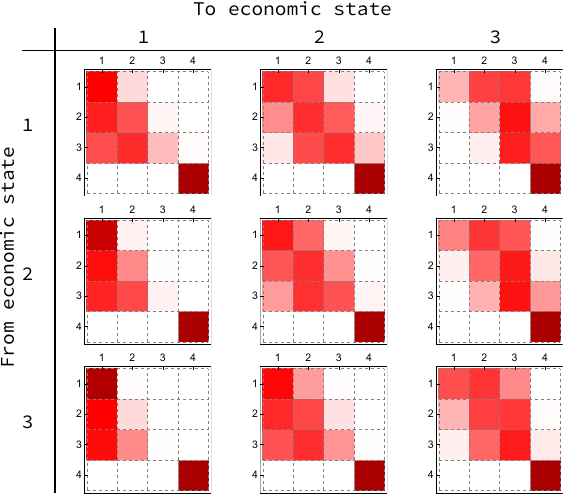}\ \ \
  \ \ \
  \includegraphics[width=0.4\textwidth]{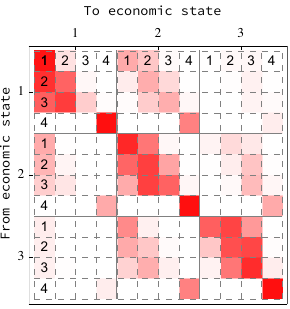}\ \ \
  \caption{$M^{(a,b)}$ matrices (left) and $P$ matrix
    (right). Economic states are labelled 1 (good), 2 (neutral) and 3 
    (bad). Ratings are ordered from the highest rating class to the
    worst (default). Darker shades indicate values closer to one,
    whereas white indicates a value of zero or close to zero.}
  \label{fig:rating_example}
\end{figure}

There are four rating classes, of which the first three -- ordered
from best to worst -- are non-default ratings and the fourth is
the (absorbing) default state. The left hand side of Figure
\ref{fig:rating_example} shows the $M^{(a,b)}$ matrices for economic
states $a,b\in \{1,2,3\}$. In this example, the economic state
transition $(a,b)$ has a strong impact on rating migration: if $b$ is
the best economic state 1, upgrades clearly dominate downgrades (as
illustrated by the three matrices in the first column), whereas a
transition to the worst state $b=3$ increases the probability of
downgrades. Two examples of $M^{(a,b)}$ matrices are: 
\begin{equation*}
  \label{eq:11}
  M^{(2,1)} =
  \begin{pmatrix}
    0.9808& 0.0192& 0.0001& 4\cdot 10^{-7}\\
    0.8371& 0.1598& 0.0030& 0.0001\\
    0.6137& 0.3660& 0.0195& 0.0008\\
    0& 0& 0& 1
  \end{pmatrix},\ \ \
  M^{(2,3)} = 
  \begin{pmatrix}
    0.1737 & 0.5058 & 0.3188 & 0.0017 \\
    0.0213 & 0.2460 & 0.7004 & 0.0323 \\
    0.0033 & 0.0911 & 0.7814 & 0.1242\\
    0 & 0 & 0 & 1
  \end{pmatrix}.
\end{equation*}

The right-hand side of Figure \ref{fig:rating_example} shows the
$P$ matrix resulting from $p_{(a,r),(b,s)}=m_{ab} m^{(a,b)}_{rs}$, for
all $a,b\in \mathcal A$ and $r,s\in \mathcal R$. 

\section{Analysis of the Markov property of rating processes}
\label{sec:markovian}

\subsection{A class of Markovian rating processes}

As a consequence of the dependence  on the underlying economic
states, rating processes $(R_t)_{t\in \Nzero}$ typically do not have
the Markov property if their state space is restricted to the set of rating
classes. In this section, we establish conditions on the transition
matrices $M^{(a,b)}$ that ensure that $(R_t)_{t\in \Nzero}$ is a
Markov process on $\mathcal R$.  

It is a key model assumption in this paper that rating processes have
the Markov property on the extended state space $\mathcal X=\mathcal
A\times \mathcal R$, which consists of all combinations of economic
states and rating classes. More formally, $(R_t)_{t\in \Nzero}$ is the
projection of the time-homogeneous Markov chain $(X_t)_{t\in \Nzero}$
with state space $\mathcal X$ onto the state space $\mathcal
R$. Hence, the question whether $(R_t)_{t\in \Nzero}$ is a Markov
process can be addressed in the more general context of studying the
Markov property of functions of Markov processes, see
\citet{Rosenblatt1971}.

The general results in \citet{Rosenblatt1971} can be significantly
improved in our setup by utilizing the structure of the transition
matrix $P=(p_{(a,r),(b,s)})\in [0,1]^{K J\times K J}$ given in Lemma
\ref{200125:l1}, i.e., representing $p_{(a,r),(b,s)}$ in the form
$p_{(a,r),(b,s)}=m_{ab} m_{rs}^{(a,b)}$. 
Not surprisingly, severe restrictions on the conditional transition
matrices $M^{(a,b)}$ are required to ensure the Markov property of the
rating process $(R_t)_{t\in \Nzero}$. It turns out that the following
definition formalizes the key property of conditional transition
matrices $M^{(a,b)}$ of Markovian rating processes $(R_t)_{t\in
  \Nzero}$.  

\begin{definition}
The conditional transition matrices $M^{(a,b)}$ 
have {\em identical ratios of non-default migration probabilities} if 
for all $r,r',s,s'\in \widehat{\mathcal R}$ and $a,a',b,b'\in \mathcal A$
  \begin{equation}
\frac{m_{rs}^{(a,b)}}{m_{r's'}^{(a,b)}}=\frac{m_{rs}^{(a',b')}}{m_{r's'}^{(a',b')}}.
    \label{200126:1}
  \end{equation}
\end{definition}
It is easily verified that the two matrices provided in the example in Section
\ref{sec:example-rating-syst} do not fulfil this:
$m_{1,1}^{(2,1)}/m_{1,2}^{(2,1)} > 1$, while
  $m_{1,1}^{(2,3)}/m_{1,2}^{(2,3)}<1$. 

As an example of a simple class of rating migration processes that
have identical ratios of non-default migration probabilities, consider
processes based on just two ratings $r$ and $\overline{r}$. Since $r$
is the only element of $\widehat{\mathcal R}$ condition
(\ref{200126:1}) is always satisfied, regardless of the specification
of the conditional transition matrices. 

The above condition significantly reduces the free parameters that are
required for specifying conditional transition matrices. Choose
$a,b\in {\mathcal A}$ and $r,s\in \widehat{\mathcal R}$ and specify
$M^{(a,b)}$ and $m_{rs}^{(a',b')}$ for all $(a',b')\not= (a,b)$. All
other parameters $m_{r's'}^{(a',b')}$ are defined by (\ref{200126:1}),
i.e., for $(r',s')\not=(r,s)$ 
\[ m_{r's'}^{(a',b')}=
\frac{m_{r's'}^{(a,b)}}{m_{rs}^{(a,b)}}
m_{rs}^{(a',b')}.
\]
Since $M^{(a,b)}$ is a stochastic matrix with one absorbing state, only 
$(J-1)^2+K^2-1$ free parameters are required for fully specifying
conditional transition matrices with identical ratios of non-default
migration probabilities. In the case $K=1$, just one rating transition
matrix is defined, which is equivalent to the standard specification
of a time-homogeneous Markov chain. For $K>1$, the number of
conditional rating migration matrices $M^{(a,b)}$ as well as the
number of free parameters increases by $K^2$.  

The set of Equations (\ref{200126:1}) is a sufficient condition for
the Markov property, see Corollary \ref{200322:c1}. In fact, Theorem
\ref{200126:t1} shows that the Equations (\ref{200126:1}) are
equivalent to the economic state variable $A_t$ being independent of
the rating history up to time $t$ (conditional on no default), which
is a slightly stronger condition than the Markov property of
$(R_t)_{t\in \Nzero}$. 

More formally, let $t\in\Nzero$. 
Conditional on no default, the economic state variable $A_t$ is
independent of the rating history up to time $t$ if for all $a\in
\mathcal A$ and $r_0,s_0,\ldots,r_t,s_t\in \widehat{\mathcal R}$ with
$\p(R_0=r_0,\ldots,R_t=r_t)>0$ and $\p(R_0=s_0,\ldots,R_t=s_t)>0$,   
 \begin{equation}
\p(A_{t}=a|R_0=r_0,\ldots,R_t=r_t)
=\p(A_{t}=a|R_0=s_0,\ldots,R_t=s_t)= \p(A_t = a| R_t\not=\overline r).
    \label{200123:3}
  \end{equation}

\begin{theorem} ~\label{200126:t1}
The following conditions are equivalent.
 \begin{enumerate}[(i)]
  \item The conditional transition matrices $M^{(a,b)}$ have identical
    ratios of non-default migration probabilities. 
  \item For every joint distribution $\lambda\in [0,1]^{K J}$ of
    independent variables $A_0$ and $R_0$ and every stochastic matrix
    $M\in [0,1]^{K \times K}$ the rating process $R_t$ specified by
    the distribution $\lambda$ and transition matrices $M$ and
    $M^{(a,b)}$ satisfies (\ref{200123:3}) for every $t\in \Nzero$. 
  \end{enumerate}
\end{theorem}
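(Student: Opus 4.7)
The plan is to work with the multiplicative factorization $m_{rs}^{(a,b)}=f(r,s)\,g(a,b)$ for $r,s\in\widehat{\mathcal R}$, which (fixing reference ratings $r^{\sharp},s^{\sharp}$ and dividing) is easily seen to be equivalent to condition~(\ref{200126:1}). For (i)\,$\Rightarrow$\,(ii), I would expand $\p(A_t=a,R_0=r_0,\ldots,R_t=r_t)$ as a sum over intermediate economic-state paths $a_0,\ldots,a_{t-1}$ (with $a_t=a$), using the time-$0$ independence of $A_0$ and $R_0$. Substituting $m_{r_ir_{i+1}}^{(a_i,a_{i+1})}=f(r_i,r_{i+1})\,g(a_i,a_{i+1})$ pulls the product $\p(R_0=r_0)\prod_{i=0}^{t-1}f(r_i,r_{i+1})$ outside the sum over economic paths, so the joint probability splits as (a function of the rating path) $\times$ (a function of $a$ alone). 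The rating-path factor cancels in the conditional $\p(A_t=a\mid R_0=r_0,\ldots,R_t=r_t)$, leaving an expression independent of the non-default path; averaging over non-default paths identifies it with $\p(A_t=a\mid R_t\ne\overline r)$, which gives (\ref{200123:3}).

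For (ii)\,$\Rightarrow$\,(i), I plan to test (\ref{200123:3}) against two carefully designed pairs $(\lambda,M)$. First, take $\lambda$ concentrated on a single $(a^*,r^*)$ with $r^*\in\widehat{\mathcal R}$ and choose $M$ so that $m_{a^*a_1^*}=1$ for some fixed $a_1^*$. Applying (\ref{200123:3}) at $t=2$, the factor $m_{r^*r_1}^{(a^*,a_1^*)}$ cancels out of the conditional probability $\p(A_2=a\mid R_0,R_1,R_2)$, and the resulting equality across non-default $(r_1,r_2)$ forces the ratio $m_{r_1r_2}^{(a_1^*,a)}/m_{s_1s_2}^{(a_1^*,a)}$ to be independent of $a\in\mathcal A$ for all non-default $r_1,s_1,r_2,s_2$. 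Letting $a_1^*$ range over $\mathcal A$ via different choices of $M$ gives this for every source economic state, yielding a per-source factorization $m_{uv}^{(c,a)}=\phi_c(u,v)\,\psi_c(a)$ on $\widehat{\mathcal R}$, i.e., the target economic state has been decoupled from the ratings.

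Second, to also decouple source economic states, I would take $\lambda$ supported on $\{a_0^*,a_0^{**}\}\times\{r_0^*,r_0^{**}\}$ with $A_0$ and $R_0$ independent, and choose $M$ with distinct deterministic transitions $a_0^*\to a_1^*$, $a_0^{**}\to a_1^{**}$. Then $\p(A_1=a_1^*\mid R_0=r,R_1=r_1)$ becomes a M\"obius function of $\rho(r,r_1):=m_{rr_1}^{(a_0^*,a_1^*)}/m_{rr_1}^{(a_0^{**},a_1^{**})}$, and (\ref{200123:3}) forces $\rho$ to be constant on $\widehat{\mathcal R}^2$. Combining with the per-source factorization yields $\phi_a(u,v)/\phi_{a'}(u,v)$ independent of $(u,v)$ for $a\ne a'$, which extends to all $a,a'\in\mathcal A$ (trivially for $a=a'$) and gives $m_{uv}^{(a,b)}=f(u,v)\,g(a,b)$ on $\widehat{\mathcal R}$. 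The main obstacle is precisely this second step: the first test case alone leaves a separate factorization per source economic state, and the key trick is to engineer a mixed initial distribution that also decouples source states while respecting $A_0\independent R_0$, so that the M\"obius structure of the conditional probability collapses the remaining ratio to a constant.
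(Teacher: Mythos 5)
Your proof is correct, and the forward direction is in substance the paper's own argument: the multiplicative factorization $m_{rs}^{(a,b)}=f(r,s)\,g(a,b)$ on $\widehat{\mathcal R}$ is an equivalent repackaging of condition (\ref{200126:1}) (the assumed positivity of the non-default entries makes the two interchangeable), and splitting the joint path probability into a rating-path factor times an economic-path factor is exactly what the paper's induction (\ref{231117:1}) combined with Bayes' formula accomplishes. Where you genuinely diverge is the converse. The paper argues by contrapositive with a single test at $t=1$: given one violating quadruple of economic states, it first reduces to the case of distinct target states $b\ne b'$ by a short pigeonhole step (if $b=b'$, compare both ratios to a third target $c\ne b$ and note at least one of the two resulting inequalities must hold), and then deploys precisely your second construction --- independent two-point $A_0$ and $R_0$ with deterministic, distinct economic transitions --- to exhibit a violation of (\ref{200123:3}) directly. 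You instead reconstruct the full factorization in two stages: a $t=2$ test with a deterministic first transition and a fully supported second row decouples the target state separately for each source state, and the $t=1$ test decouples the sources; the two partial factorizations are then glued via the constancy of $\phi_c/\phi_{c'}$. Both routes are sound. The paper's is shorter and never needs $t>1$; yours is more systematic and makes explicit which piece of the product structure each test case pins down, at the cost of the extra combination step. Two small points to make explicit in a write-up: your gluing step needs some pair of distinct targets $d\ne d'$, i.e.\ $K\ge 2$ (for $K=1$ both (i) and (ii) hold vacuously, so nothing is lost), and in the $t=2$ test the row of $M$ out of $a_1^*$ should be chosen strictly positive so that the ratio $u_a/v_a$ is pinned down for every target $a$, not just the reachable ones.
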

\begin{proof}
  In this proof we abbreviate a rating path
  $\{R_0=r_0,\ldots,R_t=r_t\}$ and a path 
  $\{A_0=a_0,\ldots,A_t=a_t\}$ of economic states by $r_{0..t}$ and
  $a_{0..t}$ respectively.
  
``$\Rightarrow$'': 
For $t\in \Nzero$ let $a_0,b_0,\ldots,a_t,b_t\in
\mathcal A$, $r_0,s_0,\ldots,r_t,s_t\in \widehat{\mathcal R}$ with
$\p(a_{0..t})>0$ and $\p(b_{0..t})>0$. We assume that  $A_0$ and $R_0$
are independent variables and $\p(R_0=r_0)>0$ and $\p(R_0=s_0)>0$. %
We first prove by induction on $t$ that
\begin{equation}
\frac{\p(r_{0..t}|a_{0..t})}{\p(s_{0..t}|a_{0..t})}=
\frac{\p(r_{0..t}|b_{0..t})}{\p(s_{0..t}|b_{0..t})}.
\label{231117:1} 
\end{equation}
Since $A_0$ and $R_0$ are independent, 
Equation~(\ref{231117:1}) holds for $t=0$. For $t>0$, we assume that
(\ref{231117:1}) holds for $t-1$. Note that 
\begin{equation}
\p(r_{0..t}|a_{0..t})=\frac{\p(r_{0..t-1},a_{0..t-1})m_{a_{t-1}a_t}m_{r_{t-1}r_t}^{(a_{t-1},a_t)}}
{\p(a_{0..t-1})m_{a_{t-1}a_t}}=\p(r_{0..t-1}|a_{0..t-1})m_{r_{t-1}r_t}^{(a_{t-1},a_t)}. \label{231219:1}  
\end{equation}
Hence, by induction and (\ref{200126:1}),
\[
\frac{\p(r_{0..t}|a_{0..t})}{\p(s_{0..t}|a_{0..t})}=
\frac{\p(r_{0..t-1}|a_{0..t-1})m_{r_{t-1}r_t}^{(a_{t-1},a_t)}}{\p(s_{0..t-1}|a_{0..t-1})m_{s_{t-1}s_t}^{(a_{t-1},a_t)}}= \\
\frac{\p(r_{0..t-1}|b_{0..t-1})m_{r_{t-1}r_t}^{(b_{t-1},b_t)}}{\p(s_{0..t-1}|b_{0..t-1})m_{s_{t-1}s_t}^{(b_{t-1},b_t)}}=
\frac{\p(r_{0..t}|b_{0..t})}{\p(s_{0..t}|b_{0..t})}.
\]

For proving (\ref{200123:3}) we apply Bayes formula. For $t=0$,
Equation (\ref{200123:3}) is implied by the independence of $A_0$ and
$R_0$. Otherwise it follows from (\ref{231219:1}) that Bayes formula
can be expressed as 
\begin{equation}
\p(a_{0..t}|r_{0..t})=\frac{\p(a_{0..t})\p(r_{0..t-1}|a_{0..t-1})m_{r_{t-1}r_t}^{(a_{t-1},a_t)}}
{\sum_{(b_0,\ldots,b_t)\in{\mathcal A}^{t+1}}\p(b_{0..t})\p(r_{0..t-1}|b_{0..t-1})m_{r_{t-1}r_t}^{(b_{t-1},b_t)}}.
\label{231117:2} 
\end{equation}
Hence, $\p(a_{0..t}|r_{0..t})=\p(a_{0..t}|s_{0..t})$ is equivalent to
\begin{multline}
\sum_{(b_0,\ldots,b_t)\in{\mathcal A}^{t+1}} \p(b_{0..t})
\p(r_{0..t-1}|a_{0..t-1})\p(s_{0..t-1}|b_{0..t-1})
m_{r_{t-1}r_t}^{(a_{t-1},a_t)}m_{s_{t-1}s_t}^{(b_{t-1},b_t)}=
\\
\sum_{(b_0,\ldots,b_t)\in{\mathcal A}^{t+1}} \p(b_{0..t})
\p(r_{0..t-1}|b_{0..t-1})\p(s_{0..t-1}|a_{0..t-1})
m_{s_{t-1}s_t}^{(a_{t-1},a_t)}m_{r_{t-1}r_t}^{(b_{t-1},b_t)}.
\label{231121:1}
\end{multline}
By (\ref{231117:1}), Equation (\ref{231121:1}) holds if and only if
\[ \sum_{(b_0,\ldots,b_t)\in{\mathcal A}^{t+1}} \p(b_{0..t})
\p(r_{0..t-1}|a_{0..t-1})\p(s_{0..t-1}|b_{0..t-1})
(m_{r_{t-1}r_t}^{(a_{t-1},a_t)}m_{s_{t-1}s_t}^{(b_{t-1},b_t)}-
m_{s_{t-1}s_t}^{(a_{t-1},a_t)}m_{r_{t-1}r_t}^{(b_{t-1},b_t)})=0.
\]
Since the conditional transition matrices $M^{(a,b)}$ have identical
ratios of non-default migration probabilities it follows that
$m_{r_{t-1}r_t}^{(a_{t-1},a_t)}m_{s_{t-1}s_t}^{(b_{t-1},b_t)}- 
m_{s_{t-1}s_t}^{(a_{t-1},a_t)}m_{r_{t-1}r_t}^{(b_{t-1},b_t)}=0$. Hence, we obtain 
$\p(a_{0..t}|r_{0..t})=\p(a_{0..t}|s_{0..t})$, which implies (\ref{200123:3}).

\medskip
``$\Leftarrow$'': Let $a,a',b,b'\in \mathcal A$ and $r,r',s,s'\in
\widehat{\mathcal R}$ such that $m_{rs}^{(a,b)}m_{r's'}^{(a',b')}\not=
m_{r's'}^{(a,b)}m_{rs}^{(a',b')}$. If $b=b'$ we choose $c\in \mathcal
A$ with $c\not=b$. Obviously, at least one of the two inequalities 
\[ \frac{m_{rs}^{(a,b)}}{m_{r's'}^{(a,b)}}\not= \frac{m_{rs}^{(a,c)}}{m_{r's'}^{(a,c)}}, \ \ \ \  \ \ 
\frac{m_{rs}^{(a',b')}}{m_{r's'}^{(a',b')}}\not= \frac{m_{rs}^{(a,c)}}{m_{r's'}^{(a,c)}}
\]
holds. Hence, we can assume that $b\not= b'$.
We now define $A_0$ and $R_0$ as independent variables specified by 
\[ \p(A_{0}=a)=\p(A_{0}=a')=\frac{1}{|\{a,a'\}|},\ \ \
  \p(R_{0}=r)=\p(R_{0}=r')=\frac{1}{|\{r,r'\}|}. \]
Furthermore, it is assumed that the transition matrix $M$ satisfies
$m_{ab}=m_{a'b'}=1/2$ if $a= a'$ and $m_{ab}=m_{a'b'}=1$ if $a\not= a'$.

In both cases, $\p(A_0=a,A_1=b)=\p(A_0=a',A_1=b')=1/2$ and Bayes
formula (\ref{231117:2}) specializes to  
\[
\p(A_0=a,A_1=b|R_0=r,R_1=s)=\frac{m_{rs}^{(a,b)}}{m_{rs}^{(a,b)}+m_{rs}^{(a',b')}}.
\]
Since $m_{rs}^{(a,b)}(m_{r's'}^{(a,b)}+m_{r's'}^{(a',b')})\not=
m_{r's'}^{(a,b)}(m_{rs}^{(a,b)}+m_{rs}^{(a',b')})$ we  obtain
\begin{eqnarray*}
\p(A_0=a,A_1=b|R_0=r,R_1=s) & = & \frac{m_{rs}^{(a,b)}}{m_{rs}^{(a,b)}+m_{rs}^{(a',b')}} \\
& \not= & \frac{m_{r's'}^{(a,b)}}{m_{r's'}^{(a,b)}+m_{r's'}^{(a',b')}} \\
& = & \p(A_0=a,A_1=b|R_0=r',R_1=s').
\end{eqnarray*}
It follows from $b\not= b'$ that $\p(A_0=a,A_1=b)=\p(A_1=b)$
and therefore 
\[ \p(A_1=b|R_0=r,R_1=s) 
\not=  \p(A_1=b|R_0=r',R_1=s'). \]
Hence, (\ref{200123:3}) does not hold.
\end{proof}

\begin{corollary} ~\label{200322:c1}
The rating process $R_t$ has the Markov property if
 \begin{enumerate}[(i)]
  \item the variables $A_0$ and $R_0$, which are specified by their
    joint distribution $\lambda\in [0,1]^{KJ}$, are independent  and 
  \item the conditional transition matrices $M^{(a,b)}$ have identical
    ratios of non-default migration probabilities. 
  \end{enumerate}
\end{corollary}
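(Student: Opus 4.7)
The plan is to derive the Markov property of $(R_t)_{t\in \Nzero}$ from the stronger conditional independence statement (\ref{200123:3}), which Theorem \ref{200126:t1} already delivers under hypotheses (i) and (ii) of the corollary. So my first step is simply to invoke Theorem \ref{200126:t1} to conclude that for every $t\in \Nzero$, every non-default rating path $(r_0,\ldots,r_t)$ with positive probability, and every $a\in\mathcal A$,
\[
\p(A_t=a\mid R_0=r_0,\ldots,R_t=r_t)=\p(A_t=a\mid R_t\neq \overline r).
\]

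Next I would verify the Markov identity $\p(R_{t+1}=s\mid R_0=r_0,\ldots,R_t=r_t)=\p(R_{t+1}=s\mid R_t=r_t)$ by splitting into cases. If $r_t=\overline r$, both sides equal $\1_{\{s=\overline r\}}$ because the default state is absorbing, so there is nothing to check. If $r_t\in\widehat{\mathcal R}$, then necessarily $r_0,\ldots,r_{t-1}\in\widehat{\mathcal R}$ as well (again by absorption). Conditioning on the pair $(A_t,A_{t+1})$ and using the Markov property of the joint chain $(X_t)$ together with $p_{(a,r),(b,s)}=m_{ab}m_{rs}^{(a,b)}$, I get
\[
\p(R_{t+1}=s\mid R_0=r_0,\ldots,R_t=r_t)=\sum_{a,b\in\mathcal A} m_{ab}\,m_{r_t s}^{(a,b)}\,\p(A_t=a\mid R_0=r_0,\ldots,R_t=r_t).
\]
By the application of Theorem \ref{200126:t1} above, the final conditional probability is $\p(A_t=a\mid R_t\neq \overline r)$, which depends on neither the history nor the specific value of $r_t$.

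The last step is to compare this with the right-hand side. Averaging the identity of Theorem \ref{200126:t1} over all non-default rating paths ending at a fixed $R_t=r_t\in\widehat{\mathcal R}$ yields $\p(A_t=a\mid R_t=r_t)=\p(A_t=a\mid R_t\neq \overline r)$, so the same computation gives
\[
\p(R_{t+1}=s\mid R_t=r_t)=\sum_{a,b\in\mathcal A} m_{ab}\,m_{r_t s}^{(a,b)}\,\p(A_t=a\mid R_t\neq \overline r),
\]
matching the previous expression. The main subtlety, and the only non-routine point, is that (\ref{200123:3}) is phrased for full rating histories rather than for the marginal $\{R_t=r_t\}$; one has to observe that since the right-hand side of (\ref{200123:3}) is already independent of the path (and even of $r_t$ itself, as long as it is non-default), averaging over paths is harmless. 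Once this is noted, no further computation is needed.
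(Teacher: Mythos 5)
Your proof is correct and follows essentially the same route as the paper's: invoke Theorem \ref{200126:t1} to decouple $A_t$ from the rating history, then expand $\p(R_{t+1}=s\mid R_0,\ldots,R_t)$ by conditioning on $(A_t,A_{t+1})$ and use $p_{(a,r),(b,s)}=m_{ab}m_{rs}^{(a,b)}$, with the absorbing state handled separately. The only cosmetic difference is that you route both sides through the path-independent quantity $\p(A_t=a\mid R_t\neq\overline r)$ and make the averaging over histories explicit, whereas the paper passes directly to $\p(A_t=a\mid R_t=r_t)$ and leaves that averaging implicit.
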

\begin{proof} By Theorem \ref{200126:t1}, Equation (\ref{200123:3}) holds for $t\in \Nzero$,
which implies
\begin{equation*}
\label{eq:9}
\p(A_t=a|R_0=r_0, \ldots, R_t=r_t) = \p(A_t=a|R_t=r_t)
\end{equation*}
for $a\in \mathcal A$ and
$r_0,\ldots,r_t\in \widehat{\mathcal R}$ with
$\p(R_0=r_0,\ldots,R_t=r_t)>0$. Hence, in case of no default at time
$t$, i.e., $R_t\not= \overline{r}$, we obtain 
     \begin{multline*}
    \p(R_{t+1}=r_{t+1}|R_0=r_0,\ldots,R_t=r_t)\\
    \begin{aligned}
      & = \sum_{a,b\in \mathcal A} \p(R_{t+1}=r_{t+1},
        A_t=a,A_{t+1}=b|
        R_0=r_0, \ldots, R_t=r_t)\\
      &=  \sum_{a,b\in \mathcal A} \p(R_{t+1}=r_{t+1},
        A_{t+1}=b| A_t=a,
        R_0=r_0, \ldots, R_t=r_t)\, \p(A_t=a|R_0=r_0,\ldots, R_t=r_t)\\
      & =
      \sum_{a,b\in \mathcal A} \p(R_{t+1}=r_{t+1},A_{t+1}=b|R_t=r_t,A_t=a)  \p(A_{t}=a|R_t=r_t) \\
      & = \p(R_{t+1}=r_{t+1}|R_t=r_t).
    \end{aligned}
   \end{multline*}
If $R_t=\overline{r}$ then the Markov property holds because $\overline{r}$ is absorbing.
\end{proof}

\subsection{Time-inhomogeneous Markov chains on ratings}
\label{sec:time-inhom-mark}
As shown above, rating processes typically do not have the Markov
property. However, it is easy to prove that for each rating process
$(R_t)_{t\in \Nzero}$, a time-inhomogeneous Markov process can be
constructed that replicates the unconditional distribution of
$(R_t)_{t\in \Nzero}$ for each $t\in \Nzero$.\footnote{A Markov chain
  is time-inhomogeneous if the transition matrices are time-dependent,
  see e.g.\ Chapter 1 of \cite{Bremaud1999}.}
Using mathematical results from Markov chain theory, this
process will then be used to determine the long-term and asymptotic
rating transition behaviour of $(R_t)_{t\in \Nzero}$, see Section
\ref{sec:asympt-prop}. 

More precisely, we define a time-inhomogeneous Markov
chain on the state space $\mathcal R$ that
replicates the rating distribution
\begin{equation*}
  (\p(R_t= r))_{r\in{\mathcal R}}, \quad t\in \Nzero.
\end{equation*}

\begin{proposition} \label{200503:p1}
  There exists a time-inhomogeneous Markov process
  $({\widetilde R}_t)_{t\in\Nzero}$ with
  \begin{enumerate}[(i)]
  \item state space $\mathcal R$,
  \item initial distribution ${\tilde\lambda}$ on $[0,1]^J$ defined by
    ${\tilde\lambda}=(\p(R_0= r))_{r\in{\mathcal R}}$,
  \item transition matrices $({\widetilde P}_t)_{t\in \Nzero}$ that can be
    determined from $(\lambda,P)$, such that
    \begin{equation}
      \p({\widetilde R}_t=r) = \p(R_t=r), \quad t\in \Nzero, \quad r\in{\mathcal R}.
      \label{180207:1}
    \end{equation}
  \end{enumerate}
\end{proposition}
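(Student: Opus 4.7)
The plan is to construct $(\widetilde R_t)_{t\in\Nzero}$ explicitly by reading off, at each time step, the one-step conditional distribution of $R_{t+1}$ given $R_t$ under the law of the original rating process. I would set $\widetilde\lambda_r = \p(R_0 = r)$ for $r \in \mathcal R$, and for each $t \in \Nzero$ and each $r \in \mathcal R$ with $\p(R_t = r) > 0$ define
\[
(\widetilde P_t)_{r,s} = \p(R_{t+1} = s \mid R_t = r), \quad s \in \mathcal R.
\]
For rows $r$ with $\p(R_t = r) = 0$, the row would be set arbitrarily to any probability vector (e.g.\ the indicator of $\overline r$); such rows are never visited by $(\widetilde R_t)$, so the choice is immaterial.

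To confirm that each $\widetilde P_t$ is computable from $(\lambda, P)$, I would let $\mu_t \in [0,1]^{KJ}$ denote the distribution of $X_t$, which satisfies $\mu_0 = \lambda$ and the recursion $\mu_{t+1} = \mu_t P$ (viewing $\mu_t$ as a row vector). Then
\[
\p(R_{t+1} = s \mid R_t = r) = \frac{\sum_{a,b \in \mathcal A} \mu_t((a,r))\, p_{(a,r),(b,s)}}{\sum_{a \in \mathcal A} \mu_t((a,r))},
\]
which exhibits $\widetilde P_t$ as a deterministic function of $(\lambda, P)$, giving (iii).

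Next, I would establish (\ref{180207:1}) by induction on $t$. The base case $t=0$ is immediate from the definition of $\widetilde\lambda$. For the inductive step, assuming $\p(\widetilde R_t = r) = \p(R_t = r)$ for all $r \in \mathcal R$, the Markov property of $(\widetilde R_t)$ together with the induction hypothesis yields
\[
\p(\widetilde R_{t+1} = s) = \sum_{r \in \mathcal R} (\widetilde P_t)_{r,s}\, \p(\widetilde R_t = r) = \sum_{r:\, \p(R_t = r) > 0} \p(R_{t+1} = s \mid R_t = r)\, \p(R_t = r) = \p(R_{t+1} = s),
\]
completing the induction.

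There is essentially no substantive obstacle in this argument; the construction is direct. The only mild subtlety is the treatment of rating states with zero probability at time $t$, but this is purely cosmetic because the corresponding rows of $\widetilde P_t$ carry no mass in any marginal computation. It is worth emphasising that $(\widetilde R_t)$ only reproduces the one-dimensional marginals of $(R_t)$ and not the full finite-dimensional distributions; this is precisely what the proposition claims, and it is consistent with $(R_t)$ being generically non-Markov while $(\widetilde R_t)$ is time-inhomogeneous Markov by construction.
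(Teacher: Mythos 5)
Your construction is exactly the paper's: define $\widetilde P_t$ row-wise as the one-step conditional law $\p(R_{t+1}=s\mid R_t=r)$, express it via the marginal $\lambda P^t$ of $X_t$ to show it is determined by $(\lambda,P)$, handle zero-probability rows by an arbitrary (default-absorbing) convention, and verify the marginal identity by induction. The argument is correct and matches the paper's proof up to an immaterial shift in the time index of $\widetilde P_t$.
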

In particular, long-term default probabilities and distributions of
non-default ratings can be immediately derived via 
\begin{equation}
  (\p(R_T= r))_{r\in{\mathcal R}} = {\tilde\lambda} \cdot
  \prod_{t=1}^T {\widetilde P}_t, \quad T\in\Nzero. \label{240210:1}
\end{equation}

Essentially, the distributions of the process ${\widetilde R}$ are
averaged with respect to the economic state transitions. Depending on 
the initial distribution $\lambda$ of $(X_t)_{t\in
  \Nzero}=(A_t,R_t)_{t\in \Nzero}$, information about the initial
economic state may enter the rating distributions, which creates the
time-inhomogeneity in the transitions.   
\begin{proof}
  First, we construct transition matrices $({\widetilde P}_t)_{t\in \Nzero}$. For
  $t\in \Nzero$, let $\lambda_t\in [0,1]^{KJ}$ be the distribution of
  $X_t$, i.e., $ \lambda_{t,(a,r)}= \p(X_t=(a,r))$ for $(a,r)\in
  \mathcal X$, 
and define the matrix
  ${\widetilde P}_t=({\tilde p}_{t,rs})\in [0,1]^{J\times J}$ for
  $t\geq 1$: if 
  $\p(R_{t-1}=r)>0$, then
  \begin{align*}
    {\tilde p}_{t,rs}
    &= \p(R_t=s|R_{t-1}=r) %
      = \sum_{a\in \mathcal A} \p(R_t=s|R_{t-1}=r,A_{t-1}=a) \cdot
      \p(A_{t-1}=a|R_{t-1}=r)\\
    &= \sum_{a\in \mathcal A} \p(R_t=s|R_{t-1}=r,A_{t-1}=a) \cdot
      \frac{\p(A_{t-1}=a,R_{t-1}=r)} {\p(R_{t-1}=r)}\\
    &= \frac{\sum_{a,b\in \mathcal A} p_{(a,r)(b,s)} \cdot
      \lambda_{t-1,(a,r)}} {\sum_{a'\in \mathcal A} \lambda_{t-1,(a',r)}}. %
  \end{align*}

  Otherwise, ${\tilde p}_{t,rs}=0$ for $s\in\widehat{\mathcal R}$ and
  ${\tilde p}_{t,r \overline{r}}=1$.  By definition, the matrices
  ${\widetilde P}_t$, $t\geq 1$, 
  are stochastic and can be calculated iteratively from
  $(\lambda, P)$.  Hence, the process $({\widetilde R}_t)_{t\geq\Nzero}$
  can be defined as a time-inhomogeneous Markov chain. It is easily
  verified that $\p({\widetilde R}_t=r)=\p(R_t=r)$, $t> 0$: From
  \begin{equation*}
    \p(R_t=s) = \sum_{r\in {\mathcal R}} \p(R_t=s|R_{t-1}=r)\cdot
    \p(R_{t-1}=r), 
  \end{equation*}
  the claim follows by definition of ${\tilde\lambda}$ and by
  induction.
\end{proof}

\citet{Bluhm2007} propose a parametric class of time-inhomogeneous
Markov chains to calibrate observed multi-year default
rates. Proposition \ref{200503:p1} provides theoretical support to the
claim that a convincing fit can be achieved even if the underlying
process is not Markovian.

\section{Asymptotic properties}
\label{sec:asympt-prop}

In this section, we use standard techniques from the theory of Markov
chains to prove that the sequence of transition matrices 
$({\widetilde P}_t)_{t\in \Nzero}$ converges to a limit, denoted by
${\widetilde P}_{\infty}$. The proof is based on the Perron-Frobenius
Theorem (\citet{Perron1907}; \citet{Frobenius1912}).  
In order to apply the theorem we assume in this section that the
economic state process $(A_t)_{t\in \Nzero}$ is irreducible and
aperiodic. Hence, the transition matrix $M$ is primitive, i.e., there
exists an integer $k$ such that $M^k>0$. Furthermore, recall that
$m^{(a,b)}_{rs}>0$ for all $r\in\widehat{\mathcal R}$, $s\in{\mathcal
  R}$ and $a,b\in\mathcal A$. This 
assumption implies that $\widehat{P}$ is a primitive matrix, where
$\widehat{\mathcal X}=\mathcal A\times \widehat{\mathcal R}$ and
$\widehat{P}$ is the restriction of $P$ to $\widehat{\mathcal X}$,
i.e., 
$\widehat{P}=(\hat{p}_{(a,r),(b,s)})\in [0,1]^{K(J-1)\times K(J-1)}$ with
$\hat{p}_{(a,r),(b,s)}=p_{(a,r),(b,s)}$ for
$(a,r),(b,s)\in \widehat{\mathcal X}$. Likewise, we denote the restriction of
$\lambda$ to $\widehat{\mathcal X}$ by $\hat{\lambda}\in
[0,1]^{K(J-1)}$. Note that the assumption $\p(X_0\in \widehat{\mathcal X})=1$ implies that
$\hat{\lambda}$ is a distribution on $\widehat{\mathcal X}$.

The matrix $\widehat{P}$ is not stochastic but it is primitive. Hence,
the following properties are a consequence of the Perron-Frobenius 
Theorem (see e.g.\ Theorems 1.1 and 1.2 of \cite{Seneta1973}, Chapter 6
of \cite{Bremaud1999}):
\begin{enumerate}
\item [(P1)] The primitive matrix $\widehat{P}$ has a real eigenvalue
  $\rho$ with algebraic as well as geometric multiplicity one
  such that $\rho>0$ and $\rho>|\rho'|$ for any other
  eigenvalue $\rho'$. Moreover, the left eigenvector $\mu$ and
  the right eigenvector $\nu$ associated with $\rho$ can be
  chosen positive and such that $\mu \nu=1$.\footnote{Note that the
    left eigenvector $\mu$ is a row-vector and the right
    eigenvector $\nu$ is a column-vector, both of dimension 
    $K(J-1)$. Hence, $\mu \nu$ is a real number whereas $\nu \mu$ is a
    $K(J-1)\times K(J-1)$ matrix.}
\item [(P2)] The matrix $((1/\rho)\widehat{P})^n$ converges to the
  matrix $\nu \mu$. 
\end{enumerate}
We assume that the components of $\mu$ add up to 1, i.e., $\mu$ is a
distribution on $\widehat{\mathcal X}$. Let ${\tilde\mu}$ be the
corresponding distribution on $\mathcal R$ specified by
\begin{equation} \label{200327:1}
  {\tilde\mu}_r=\sum_{a\in \mathcal A}\mu_{(a,r)} \text{ for }
 r\in\widehat{\mathcal R} \quad\text{ and } \quad {\tilde\mu}_{\overline{r}}=0, 
\end{equation}
and define the stochastic matrix
${\widetilde P}_{\infty}=({\tilde p}_{\infty,rs})\in [0,1]^{J\times J}$ by
\begin{equation}
  \label{eq:2}
  {\tilde p}_{\infty,rs}=\sum_{a\in \mathcal A}\mu_{(a,r)}\sum_{b\in
    \mathcal A}
  p_{(a,r),(b,s)}/{\tilde\mu}_{r},\quad r\in\widehat{\mathcal R},
   s\in{\mathcal R}, 
\end{equation}
${\tilde p}_{\infty,\overline{r} r}=0$ for $r\in\widehat{\mathcal R}$ and
${\tilde p}_{\infty,\overline{r}\, \overline{r}}=1$.

We will now study the asymptotic properties of the
Markov chain $(X_t)_{t\in\Nzero}$. Since ${\mathcal
  A}\times\{\overline{r}\}$ is a closed subset of the state space of
$(X_t)_{t\in\Nzero}$ it immediately follows from the assumptions on
the transition matrices $M$ and $M^{(a,b)}$ that 
\begin{equation*}
    \lim_{t\rightarrow\infty}\p(X_t \in {\mathcal A}\times\{\overline{r}\})=1.
  \end{equation*}
The more relevant question is the asymptotic behaviour of
$(X_t)_{t\in\Nzero}$ conditionally on no default, which is summarized
in the following theorem. In particular, it is shown that 
the conditional distribution of non-default rating states of $X$
converges to $\mu$, which implies convergence of $({\widetilde P}_t)_{t\in \N}$ to
${\widetilde P}_{\infty}$.  Furthermore, $1-\rho$ is the limit of the
marginal default rates.\footnote{Marginal default rates refer to
  probabilities of default conditional on no prior default, as opposed
  to cumulative default rates.} 

In the following, note that $\{R_t=\overline{r}\}=\{X_t\in
(\cdot,\overline{r})\}$ and $\{R_t\not=\overline{r}\} =
\{X_t\not\in(\cdot,\overline{r})\}$. For ease of exposition, we use
both notations. 

\begin{theorem}
  \label{170621:t2}
\begin{enumerate}[(i)]
\item  For all $(a,r)\in \widehat{\mathcal X}$,
  \begin{equation*}
    \lim_{t\rightarrow\infty}\p(X_t =(a,r)| X_t\not\in(\cdot,\overline{r}))=\mu_{(a,r)}.
  \end{equation*}
  Furthermore,
  \begin{equation} \label{200521:1}
    \lim_{t\rightarrow\infty} {\widetilde P}_t={\widetilde P}_{\infty}\quad\text{ and } \quad
    \lim_{t\rightarrow\infty} \p(R_t=\overline{r}| R_{t-1}\not=
    \overline{r})=1-\rho.
  \end{equation}
\item  If the limit distribution $\mu=(\mu_{(a,r)})\in [0,1]^{K(J-1)}$ is
  the distribution of $X_0$ on the set of non-default states
  $\widehat{\mathcal X}$, i.e., $\hat{\lambda}=\mu$, then for all
  $t\in \N$
  \begin{equation*}
    \p(X_t =(a,r)|X_t\notin (\cdot,\overline{r}))=\mu_{(a,r)}\quad\text{
      and } \quad {\widetilde P}_t={\widetilde P}_{\infty}. 
  \end{equation*}
\end{enumerate}
\end{theorem}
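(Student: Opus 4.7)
The plan is to reduce both claims to the Perron-Frobenius statement applied to the defective submatrix $\widehat{P}$. Since $\overline r$ is absorbing, the process cannot return to $\widehat{\mathcal X}$ once it leaves, so for $(a,r)\in\widehat{\mathcal X}$ one has $\p(X_t=(a,r))=(\hat\lambda\,\widehat P^t)_{(a,r)}$. Property (P2) gives $\rho^{-t}\widehat P^t\to\nu\mu$ entrywise, whence $\rho^{-t}\hat\lambda\,\widehat P^t\to c\,\mu$ with $c:=\hat\lambda\,\nu$. The constant $c$ is strictly positive: $\nu>0$ by (P1) and $\hat\lambda$ is a non-zero non-negative vector on $\widehat{\mathcal X}$. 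Dividing numerator and denominator by $\rho^t$ in
\[
\p(X_t=(a,r)\mid X_t\notin(\cdot,\overline r)) \;=\; \frac{(\hat\lambda\,\widehat P^t)_{(a,r)}}{\sum_{(a',r')\in\widehat{\mathcal X}}(\hat\lambda\,\widehat P^t)_{(a',r')}}
\]
and using that $\mu$ sums to one gives the first assertion of (i), namely $\mu_{(a,r)}$.

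For the convergence of ${\widetilde P}_t$ I would rewrite the transition formula from the proof of Proposition~\ref{200503:p1} as
\[
{\tilde p}_{t,rs} \;=\; \sum_{a,b\in\mathcal A} p_{(a,r),(b,s)}\,\p(A_{t-1}=a\mid R_{t-1}=r),\qquad r\in\widehat{\mathcal R},
\]
and observe that $\p(A_{t-1}=a\mid R_{t-1}=r)=\lambda_{t-1,(a,r)}/\sum_{a'}\lambda_{t-1,(a',r)}$. The same Perron-Frobenius scaling, now at step $t-1$, shows this conditional probability tends to $\mu_{(a,r)}/{\tilde\mu}_r$, and the definition (\ref{eq:2}) identifies $\lim_t {\tilde p}_{t,rs}={\tilde p}_{\infty,rs}$; the absorbing row $r=\overline r$ is trivial in both matrices. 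For the default rate, since default is absorbing $\p(R_t\neq\overline r)=\p(R_t\neq\overline r,\,R_{t-1}\neq\overline r)$, hence
\[
\p(R_t=\overline r\mid R_{t-1}\neq\overline r) \;=\; 1-\frac{\p(R_t\neq\overline r)}{\p(R_{t-1}\neq\overline r)} \;=\; 1-\frac{\rho^t(c+o(1))}{\rho^{t-1}(c+o(1))} \;\longrightarrow\; 1-\rho.
\]

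Part (ii) is immediate from the same rewriting. If $\hat\lambda=\mu$, then $\mu\,\widehat P=\rho\mu$ yields inductively $\hat\lambda\,\widehat P^t=\rho^t\mu$, so $\p(X_t=(a,r))=\rho^t\mu_{(a,r)}$ and $\p(X_t\notin(\cdot,\overline r))=\rho^t$, giving the exact (not merely asymptotic) conditional identity for every $t$. The same cancellation delivers $\p(A_{t-1}=a\mid R_{t-1}=r)=\mu_{(a,r)}/{\tilde\mu}_r$ exactly, so the transition formula above forces ${\widetilde P}_t={\widetilde P}_\infty$ for all $t\geq 1$.

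The one subtle point is the normalization: $\widehat P$ is substochastic, so no direct invariant-distribution argument applies; dividing by $\rho^t$ and invoking (P2) is what converts the defective iteration into a rank-one limit, after which everything reduces to algebra. Verifying positivity of $c=\hat\lambda\,\nu$ (so that the limiting ratios are well-defined and the $o(1)$ bookkeeping in the default-rate step is legitimate) is the only non-trivial side condition, and it follows directly from the strict positivity of $\nu$ in (P1).
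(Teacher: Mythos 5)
Your proof is correct and follows essentially the same route as the paper: apply (P1)--(P2) to the substochastic matrix $\widehat P$, note that the conditional non-default distribution is $\hat\lambda\widehat P^t$ normalized, divide by $\rho^t$, and feed the resulting limit $\mu$ into the formula for ${\tilde p}_{t,rs}$; part (ii) follows from the exact eigenvector identity $\mu\widehat P^t=\rho^t\mu$. The only (immaterial) difference is that you obtain the marginal default-rate limit directly from $\p(X_t\notin(\cdot,\overline r))=\rho^t(c+o(1))$, whereas the paper extracts the ratio $1/\rho$ from the eigenvector relation satisfied by the limit $\mu$.
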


\begin{proof}
  (i) Since $\nu$ is a column-vector with positive entries and
  $\hat{\lambda}\not= 0$ has only non-negative entries, the vector
  product $c=\hat{\lambda} \cdot \nu$ is a positive number.  By (P2),
  \begin{equation}
    \lim_{t\rightarrow\infty} \hat{\lambda}\cdot
    ((1/\rho)\widehat{P})^t=\hat{\lambda} \cdot \nu \cdot \mu=c\cdot
    \mu. \label{170307:1} 
  \end{equation}
  Let $t\in\N$ and define $\alpha_t\in [0,1]^{K(J-1)}$ as the
  distribution of $X_t$ conditional on $R_t\not=\overline{r}$, i.e.,
  \begin{equation*}
    \alpha_t=\p(X_t=(a,r)| X_t\not\in(\cdot,\overline{r}))_{(a,r)\in\widehat{\mathcal X}}. 
  \end{equation*}
  Since the set of default states ${\mathcal A}\times\{\overline{r}\}$ is closed,
  \begin{equation}
    \alpha_t=\frac{\p(X_{t-1}\notin (\cdot,\overline{r}))}{\p(X_t\notin
      (\cdot,\overline{r}))}\cdot 
    \alpha_{t-1}\cdot \widehat{P}=\frac{1}{\p(X_t\notin (\cdot,\overline{r}))}\cdot 
    \hat{\lambda}\cdot\widehat{P}^t  \label{170309:1} 
  \end{equation}
  and therefore
  \begin{equation*}
    \alpha_t=\frac{\rho^t}{\p(X_t\notin (\cdot,\overline{r}))}\cdot
    \hat{\lambda}\cdot ((1/\rho)\widehat{P})^t.
  \end{equation*}
  Using (\ref{170307:1}) and the fact that the $L_1$-norm of $\mu$ and
  $\alpha_t$ is 1 for each $t\in\N$ we  obtain
  \begin{equation}
    \lim_{t\rightarrow\infty} \alpha_t=\mu. \label{170307:2}
  \end{equation}
Since $\widehat{P}$ is primitive, $\p(X_t=(a,r))>0$ for all
  $(a,r)\in\widehat{\mathcal X}$ and sufficiently large $t\in\N$. By the definition of the matrix
  ${\widetilde P}_t=({\tilde p}_{t,rs})\in [0,1]^{J\times J}$, for all $r\in\widehat{\mathcal R}$ and $s\in{\mathcal R}$
  \begin{align}
    {\tilde p}_{t+1,r s} & =  \left( \sum_{a\in A}\p(X_t=(a,r)) \sum_{b\in A} p_{(a,r),(b,s)}\right) /\p(X_t\in (\cdot,r))\nonumber \\
                     & =  \left( \sum_{a\in A}(\p(X_t=(a,r))/\p(X_t\notin (\cdot,\overline{r}))) \sum_{b\in A} p_{(a,r),(b,s)}\right) /(\p(X_t\in (\cdot,r))/\p(X_t\notin (\cdot,\overline{r}))) \nonumber\\
                     & = \left( \sum_{a\in A} \alpha_{t,(a,r)}
                       \sum_{b\in A} p_{(a,r),(b,s)}\right)
                       /\sum_{a\in A} \alpha_{t,(a,r)} 
                       \label{170619:3}
  \end{align}
  and ${\tilde p}_{t+1,\overline{r}r}=0$ for $r\in\widehat{\mathcal R}$  and
  ${\tilde p}_{t+1,\overline{r}\, \overline{r}}=1$. Hence, by (\ref{170307:2}),
  \begin{equation*}
    \lim_{t\rightarrow\infty} {\widetilde P}_t={\widetilde P}_{\infty}.
  \end{equation*}
  From (\ref{170307:2}) and the first equation in (\ref{170309:1}) it
  follows that
  \begin{equation*}
    \mu=\left(\lim_{t\rightarrow\infty}\frac{\p(X_{t-1}\notin
        (\cdot,\overline{r}))}{\p(X_t\notin (\cdot,\overline{r}))}\right) \cdot \mu\cdot 
    \widehat{P}.
  \end{equation*}
  Since $\mu$ is an eigenvector with eigenvalue $\rho$,
  \begin{equation*}
    \lim_{t\rightarrow\infty}\frac{\p(X_{t-1}\notin (\cdot,\overline{r}))}{
      \p(X_t\notin (\cdot,\overline{r}))}=\frac{1}{\rho}\quad\text{ and }\quad
    \lim_{t\rightarrow\infty} \p(X_t\in (\cdot,\overline{r}) |X_{t-1}\notin
    (\cdot,\overline{r}))=1-\rho.
  \end{equation*}
  (ii) We now assume that $\hat{\lambda}=\mu$. It follows from
  (\ref{170309:1}) and $\mu\cdot\widehat{P}=\rho\cdot \mu$ that
  \begin{equation*}
    \alpha_t=\frac{1}{\p(X_t\notin (\cdot,\overline{r}))}\cdot
    \mu\cdot\widehat{P}^t=\frac{1}{\p(X_t\notin
      (\cdot,\overline{r}))}\cdot\rho^t\cdot \mu=\mu 
  \end{equation*}
  for every $t\in\N$ and therefore, by (\ref{170619:3}),
  ${\widetilde P}_t={\widetilde P}_{\infty}$.
\end{proof}

\begin{definition} \label{230102:d1}
Let $({S}_t)_{t\in \Nzero}$ be the time-homogeneous
Markov chain with state space $\mathcal R$, transition matrix
${\widetilde P}_{\infty}=({\tilde p}_{rs})\in [0,1]^{J\times J}$ and
initial distribution ${\tilde\lambda}$. Then $({S}_t)_{t\in \Nzero}$
is called the {\em asymptotic approximation\/} of $(R_t)_{t\in \Nzero}$. 
\end{definition}

The following corollary summarises the properties of the asymptotic approximation.

\begin{corollary} \label{200503:c2}
\begin{enumerate}[(i)]
\item For any $r\in\widehat{\mathcal R}$,
  \begin{equation} \label{231229:1}
    \lim_{t\rightarrow\infty}\p({S}_t =r| {S}_t\not= \overline{r})=
    \lim_{t\rightarrow\infty} \p(R_t = r| R_t\not = \overline{r}).
  \end{equation}
\item If the limit distribution
$\mu=(\mu_{(a,r)})\in [0,1]^{K(J-1)}$ is the distribution of $X_0$ on
$\widehat{\mathcal X}$  then 
\begin{equation} \label{200503:1}
  \p({S}_t =r) = \p(R_t=r) =\p(X_t\in (\cdot,r))
\end{equation}
for $t\in \Nzero$ and $r\in{\mathcal R}$. Furthermore, for $t\in{\mathbb N}$,
\begin{equation} \label{200503:2}
  \p({S}_t =\overline{r}|\p({S}_{t-1} \not=\overline{r})
= 
\p(R_t =\overline{r}|\p({R}_{t-1} \not=\overline{r})
= (1-\rho)\rho^{t-1}.
\end{equation}
\end{enumerate}
\end{corollary}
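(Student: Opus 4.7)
I would tackle the two parts of the corollary separately. Part (i) needs a convergence result for $(S_t)$ analogous to Theorem~\ref{170621:t2}(i) for $(X_t)$, and the natural vehicle is a second Perron--Frobenius argument applied to the restriction of $\widetilde P_\infty$ to non-default states. Part (ii) follows almost mechanically from Theorem~\ref{170621:t2}(ii) together with Proposition~\ref{200503:p1}.

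For part (i), the right-hand side of (\ref{231229:1}) follows immediately from Theorem~\ref{170621:t2}(i): summing $\p(X_t=(a,r)\mid X_t\notin(\cdot,\overline r))\to \mu_{(a,r)}$ over $a\in\mathcal A$ yields $\tilde\mu_r$. For the left-hand side I would introduce $\widehat{\widetilde P}_\infty$, the restriction of $\widetilde P_\infty$ to $\widehat{\mathcal R}\times\widehat{\mathcal R}$, and then show two things: (a) $\widehat{\widetilde P}_\infty$ is primitive and (b) $\tilde\mu$ restricted to $\widehat{\mathcal R}$ is its Perron left eigenvector with eigenvalue $\rho$. Primitivity is easy: $\mu_{(a,r)}>0$ on $\widehat{\mathcal X}$ by (P1), $m_{rs}^{(a,b)}>0$ by assumption, and $M$ being stochastic guarantees at least one positive $m_{ab}$ per row, so all entries of $\widehat{\widetilde P}_\infty$ are strictly positive. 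The eigenvector identity comes from unpacking the definition of $\tilde p_{\infty,rs}$ in (\ref{eq:2}) and using $\mu\widehat P=\rho\mu$:
\[ \sum_{r\in\widehat{\mathcal R}} \tilde\mu_r\, \tilde p_{\infty,rs} = \sum_{(a,r)\in\widehat{\mathcal X}}\sum_{b\in\mathcal A} \mu_{(a,r)}\, \hat p_{(a,r),(b,s)} = \sum_{b\in\mathcal A} \rho\,\mu_{(b,s)} = \rho\, \tilde\mu_s, \quad s\in\widehat{\mathcal R}. \]
With (a) and (b) in hand, the same Perron--Frobenius argument used for (\ref{170307:2}) in the proof of Theorem~\ref{170621:t2} shows that the conditional distribution of $S_t$ given $S_t\neq\overline r$ converges to $\tilde\mu$, which matches the limit computed for $R_t$.

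For part (ii), the assumption $\hat\lambda=\mu$ together with Theorem~\ref{170621:t2}(ii) yields $\widetilde P_t=\widetilde P_\infty$ for every $t\ge 1$, so $(S_t)$ and the time-inhomogeneous chain $(\widetilde R_t)$ from Proposition~\ref{200503:p1} share the same initial distribution $\tilde\lambda$ and the same sequence of transition matrices, hence they are equal in distribution. Proposition~\ref{200503:p1} then gives $\p(S_t=r)=\p(\widetilde R_t=r)=\p(R_t=r)=\p(X_t\in(\cdot,r))$, establishing (\ref{200503:1}). For (\ref{200503:2}), I would use $\alpha_{t-1}=\mu$ (Theorem~\ref{170621:t2}(ii)) to evaluate the one-step default probability
\[ \p(R_t=\overline r\mid R_{t-1}\neq\overline r) = \sum_{(a,r)\in\widehat{\mathcal X}} \mu_{(a,r)} \Bigl(1-\sum_{(b,s)\in\widehat{\mathcal X}} \hat p_{(a,r),(b,s)}\Bigr) = 1-\rho, \]
and then an induction based on $\p(R_t\neq\overline r)=\rho\cdot \p(R_{t-1}\neq\overline r)$ and $\p(R_0\neq\overline r)=1$ gives $\p(R_{t-1}\neq\overline r)=\rho^{t-1}$, so the marginal default probability $\p(R_t=\overline r, R_{t-1}\neq\overline r)$ factors as $(1-\rho)\rho^{t-1}$, and the analogous identity for $S$ follows from the distributional equality of $S$ and $\widetilde R$.

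The only real obstacle is the eigenvector verification in part (i): the definition (\ref{eq:2}) of $\widetilde P_\infty$ is precisely engineered so that $\tilde\mu$ inherits the Perron property from $\mu$, but it takes a careful reindexing of the double sum over $\mathcal A$ to see this cleanly. Everything else reduces to direct appeals to Theorem~\ref{170621:t2} and Proposition~\ref{200503:p1}.
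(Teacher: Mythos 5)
Your proposal is correct and follows essentially the same route as the paper: part (ii) by reduction to Theorem~\ref{170621:t2}(ii) and Proposition~\ref{200503:p1}, and part (i) by a second Perron--Frobenius application to the restriction of ${\widetilde P}_{\infty}$ to $\widehat{\mathcal R}$ with $\tilde\mu$ as its positive left eigenvector (the paper deduces the eigenvector property from Theorem~\ref{170621:t2} rather than verifying it directly from \eqref{eq:2}, but that is a cosmetic difference). You also correctly read \eqref{200503:2} as a statement about the joint probability $\p(S_t=\overline r,\,S_{t-1}\neq\overline r)$, which is what the $(1-\rho)\rho^{t-1}$ on the right-hand side requires.
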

\begin{proof} 
We first assume that the limit distribution
$\mu=(\mu_{(a,r)})\in [0,1]^{K(J-1)}$ is the distribution of $X_0$ on
$\widehat{\mathcal X}$. It follows from Theorem \ref{170621:t2} that 
${\widetilde R}_{t\in \Nzero}$ is a time-homogeneous Markov process,
which equals the asymptotic approximation $({S}_t)_{t\in \Nzero}$, and 
\[ \p(R_t\not=\overline{r}| R_{t-1}\not= \overline{r})=\rho\ \ \ \
  \hbox{ and }\ \ \ \  \p(R_t=\overline{r}| R_{t-1}\not=
  \overline{r})=1-\rho \] 
for all $t\in {\mathbb N}$. Hence, Equations (\ref{200503:1}) and
(\ref{200503:2}) follow from Proposition \ref{200503:p1}. Furthermore,
we obtain from Theorem \ref{170621:t2} that $({\tilde\mu}_r)_{r\in
  \widehat{\mathcal R}}$ is a positive left eigenvector of the
restriction of ${\widetilde P}_{\infty}$ to 
  $\widehat{\mathcal R}$. Analogously to the proof of Theorem
  \ref{170621:t2}, it follows from the Perron-Frobenius 
Theorem that
\begin{equation}
    \lim_{t\rightarrow\infty}\p({S}_t =r| {S}_t\not= \overline{r})={\tilde\mu}_{r}, \quad r\in \widehat{\mathcal R},
    \label{170310:2}
  \end{equation}
holds for any initial distribution  of ${S}_0$. Hence, (\ref{231229:1}) follows from Theorem \ref{170621:t2}.
\end{proof}

\section{Rating philosophies: Point-in-time and through-the-cycle}
\label{sec:rating-philosophies}

\subsection{Definition of TTC and PIT ratings}
\label{sec:definition-ttc-pit}

This section deals with the characterisation of rating
systems. Ratings are designed to provide information on the credit
quality of an obligor. However, rating systems may differ with respect
to the type of information reflected in the rating. There exist two
main rating philosophies:
\begin{enumerate}[(i)]
\item Point-in-time (PIT) ratings reflect the likelihood of default
  over a future period, e.g. one year, and are therefore dependent on 
  firm-specific as well as systematic factors. The PIT rating of an
  obligor reflects the current state of the economy as well as all
  available information about the current and future economic
  development. As such, PIT ratings fluctuate over economic cycles
  (e.g.\ economic downturns lead to an increased number of
  downgrades), whereas expected one-period default probabilities of
  PIT rating classes are stable.  
\item In contrast, through-the-cycle (TTC) ratings have a longer
  rating horizon covering multiple periods, i.e., the rating is
  performed through 
the cycle. As a consequence, economic effects cancel out and TTC ratings are therefore
independent of the state of
  the economy and, as such, are insensitive to economic or credit
  cycles. However, since actual corporate default rates depend on the
  economic environment the expected one-period default probabilities 
  of TTC rating classes inevitably fluctuate with the economic
  cycle.\footnote{The characteristics of expected one-period default probabilities
  of TTC rating classes is different to TTC PDs defined in the literature as long-run 
  averages of one-year default probabilities
  (e.g.\ \cite{Valles2006,Gordy2004,Altman2004}) or 
  stressed default probability (e.g.\
  \cite{Heitfield2005,Crouhy2001}). A comparison of these concepts in
  our mathematical framework is provided in Sections \ref{sec:example}
  and \ref{sec:conclusion} of this paper.} 
\end{enumerate}

\begin{figure}[t]
  \centering
  \includegraphics[width=.5\textwidth]{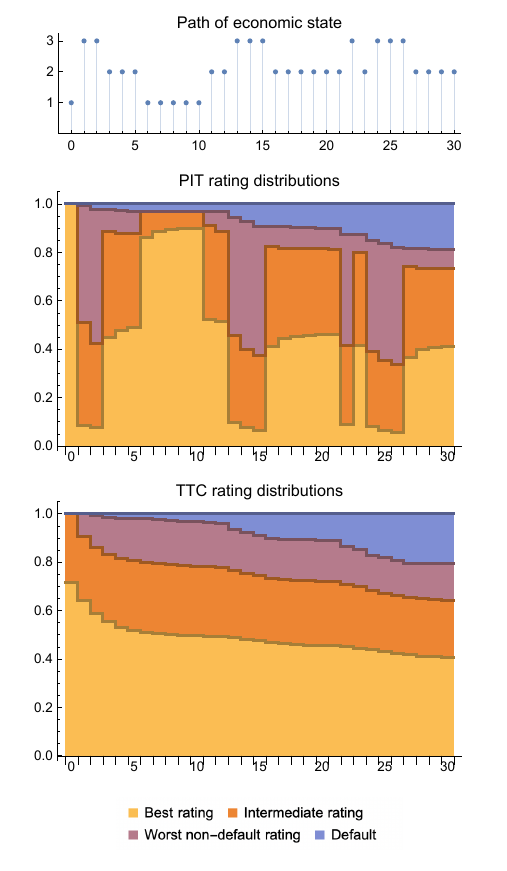}
  \caption{Example paths of economic states (top), PIT (middle) and
    TTC (bottom) rating distributions over 30 time periods for a firm
    starting in the best PIT rating. Economic states are ordered from
    best (1) to worst (3). }
  \label{fig:PITvTTC}
\end{figure}
Figure \ref{fig:PITvTTC} shows an example of an economic path (top)
together with rating distributions of PIT and TTC ratings for a firm
starting in the best PIT rating. This shows how PIT ratings fluctuate
with the economy while TTC non-default ratings are insensitive to the
economic environment. The rating models used in this
illustration will be discussed in Section \ref{sec:stylised-example}. 

Both rating philosophies are interesting theoretical concepts. 
For practical applications, however, it is almost impossible to
implement TTC or PIT ratings in a strict sense. First of all, it is
very difficult to completely separate firm-specific components from
macroeconomic developments. It is equally challenging to implement a
rating system that comprehensively reflects the current state of the
economy as well as all available information about the future economic
development in a timely manner. As a consequence, real-world rating
systems are typically hybrids in the following sense: Individual
ratings are impacted by macroeconomic developments, e.g.\ upgrades are
more frequent in economic upturns, but not all the available economic
information is reflected leading to fluctuations of default rates of
rating classes across credit cycles. Rating systems mainly differ
with respect to the weights of TTC and PIT components, i.e., whether
they are predominantly through-the-cycle or point-in-time, 
see \citet{Morone2009,Rubtsov2016} for estimating the degree of 
PIT-ness of a rating system.
The consistent calculation of TTC and PIT default probabilities is
intensively discussed in the literature, e.g.\ in 
\citet{Aguais2008,Carlehed2012,Rubtsov2016,Miu2017,Rubtsov2021}.

In order to formally define point-in-time rating migration processes
we consider expected (one-period) default probabilities for each
combination of economic state $a$ and rating $r$: 
\[ {\rm PD}(a,r)=\sum_{b\in \mathcal A} p_{(a,r)(b,\overline{r})}=
\p(R_{t+1}= \overline{r}\mid A_t=a,R_t=r) \]
for any $t\in {\mathbb N}_0$. Since the PIT rating $r$ of a company 
already reflects the impact of the state of the economy on its
creditworthiness the corresponding default probability is completely
determined by $r$. 

\begin{definition}
  \label{def:PIT}
  A rating migration process is {\em point-in-time (PIT)} if ${\rm
    PD}(a,r)={\rm PD}(b,r)$ for all ratings $r\in{\mathcal R}$ and  
economic states $a,b\in {\mathcal A}$.
\end{definition}

Essentially, this captures that ratings
  are assigned such that expected default and survival probabilities
  of a rating class do not depend on the economic state. However,
  ex-post default frequencies  
$\p(R_{t+1}=\overline{r}|R_t=r,A_t=a,A_{t+1}=b)$ still depend on the economic
state transition, e.g.\ if the economy transitions into a worse state,
more defaults will be observed compared to a neutral or positive
transition.

In contrast, through-the-cycle ratings are insensitive to economic or credit
cycles. More formally, let $Q^{(a,b)}\in [0,1]^{J\times J}$ be the
matrix of transition probabilities from rating $r$ to rating $s$
conditionally on economic states $a,b\in {\mathcal A}$ and no default: 
\begin{equation}
  q_{rs}^{(a,b)} =
  \begin{cases}
    \p(R_{t+1}=s|R_t=r,A_t=a,A_{t+1}=b,R_{t+1}\not=\overline{r}) =
    \displaystyle\frac{m_{rs}^{(a,b)}} {1-m_{r\overline{r}}^{(a,b)}}, &\text{
      if }
    r,s\not=\overline{r},\\
    1,&\text{ if } r=s=\overline{r},\\
    0,&\text{ else.}
  \end{cases}\label{220827:1}
\end{equation}

\begin{definition}
  \label{def:QTTC}
  A rating migration process is {\em through-the-cycle (TTC)} if  there
  exists a stochastic matrix $Q\in [0,1]^{J\times J}$ such that
  $Q^{(a,b)}=Q$ for all $a, b\in \mathcal A$. 
\end{definition}

The definition captures that, conditional on no-default, rating
transitions are independent of the economic state.

\subsection{Decomposition into rating and default components}
\label{sec:decomp-into-rating}
For all $a, b\in \mathcal A$ the rating migration matrix $M^{(a,b)}$
can be decomposed into a matrix product  
\begin{equation} \label{240227:1}
  M^{(a,b)} = D^{(a,b)}\cdot Q^{(a,b)},
\end{equation}
where the non-default component $Q^{(a,b)}$ has already been specified
in (\ref{220827:1}) and the default component $D^{(a,b)}\in
[0,1]^{J\times J}$ is defined as a stochastic matrix with entries 
 \begin{align*}
 d_{r\overline{r}}^{(a,b)}&= \p(R_{t+1}=\overline{r}|R_t=r,A_t=a,A_{t+1}=b) =
                      m_{r\overline{r}}^{(a,b)},\\[10pt]
    d_{rr}^{(a,b)} &=    \p(R_{t+1}\not=\overline{r}|R_t=r,A_t=a,A_{t+1}=b) =
                     1-m_{r\overline{r}}^{(a,b)}, \quad \text{ if } r\not=\overline{r},
  \end{align*}
  and all other entries taking value zero. The default component
  $D^{(a,b)}$ can be further decomposed as 
\begin{equation*}
D^{(a,b)} = D^a\cdot C^{(a,b)},
\end{equation*}
where $D^a\in
  [0,1]^{J\times J}$ is fully specified by the default probabilities
  ${\rm PD}(a,r)$, i.e., $D_a$ depends on the current economic state
  $a$ of the economy but not on futures states, whereas $C^{(a,b)}\in
  [0,1]^{J\times J}$ specifies the ex-post deviation from the expected
  default probability ${\rm PD}(a,r)$ if the economy transitions from
  $a$ to $b$. 
More precisely,
\begin{itemize}
\item for each state $a\in \mathcal A$ of the economy, $D^a=(d_{rs}^a)\in
  [0,1]^{J\times J}$ is a stochastic matrix containing default and survival 
  probabilities unconditional on the future state of the economy; its
  entries are, for  $r\not=\overline{r}$, 
\[ d_{r\overline{r}}^a ={\rm PD}(a,r) = \p(R_{t+1}=\overline{r}|A_t=a,R_t=r),\ \ \ \ \ \ 
   d_{rr}^a = 1-d_{r\overline{r}}^a,\ \ \ \ \ \ 
  d_{\overline{r}\,\overline{r}}^a=1; \]
all other entries take value zero;
\item $C^{(a,b)}\in
  [0,1]^{J\times J}$ has entries, for $r\not=\overline{r}$,
\[ c_{rr}^{(a,b)} = \frac{d_{rr}^{(a,b)}} {d_{rr}^a},\ \ \ \ \ \ 
  c_{r\overline{r}}^{(a,b)} = 1-c_{rr}^{(a,b)},\ \ \ \ \ \ 
  c_{\overline{r}\,\overline{r}}^{(a,b)}=1; \]
all other entries in $C^{(a,b)}$ are zero. 
\end{itemize}

\begin{proposition} \label{prop:matrix-decompositionD}
For $a,b\in \mathcal A$,
\begin{equation*}
M^{(a,b)} = D^{(a,b)}\cdot Q^{(a,b)}\ \ \ \hbox{and}\ \ \ D^{(a,b)} = D^a\cdot C^{(a,b)}.
\end{equation*}
The matrices $C^{(a,b)}$ satisfy 
\[ \sum_{b\in \mathcal A} m_{ab}  C^{(a,b)}=I \]
for all $a\in \mathcal A$ and identity matrix $I\in [0,1]^{J\times J}$. 
\end{proposition}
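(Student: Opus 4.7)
The proposal is that the entire proposition reduces to direct verification of matrix entries, since $Q^{(a,b)}$, $D^{(a,b)}$, $D^a$ and $C^{(a,b)}$ are all defined explicitly in terms of the $m^{(a,b)}_{rs}$ and the $\mathrm{PD}(a,r)$. I would structure the argument in three short blocks corresponding to the three claims, and only the third really requires a substantive identity; the first two are bookkeeping.

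For the identity $M^{(a,b)} = D^{(a,b)}\cdot Q^{(a,b)}$, I would exploit that $D^{(a,b)}$ has at most two nonzero entries per non-default row $r$: $d^{(a,b)}_{rr}=1-m^{(a,b)}_{r\overline r}$ and $d^{(a,b)}_{r\overline r}=m^{(a,b)}_{r\overline r}$. Thus $(D^{(a,b)}Q^{(a,b)})_{rs} = d^{(a,b)}_{rr}\,q^{(a,b)}_{rs} + d^{(a,b)}_{r\overline r}\,q^{(a,b)}_{\overline r s}$; substituting the defining formula $q^{(a,b)}_{rs}=m^{(a,b)}_{rs}/(1-m^{(a,b)}_{r\overline r})$ for $s\neq\overline r$ (and $q^{(a,b)}_{\overline r\, \overline r}=1$) recovers $m^{(a,b)}_{rs}$ for every $s$. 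The default row $r=\overline r$ is trivial because both factors act as the identity there. The factorisation $D^{(a,b)}=D^a\cdot C^{(a,b)}$ is verified the same way: the $(r,r)$ entry gives $d^a_{rr}\cdot c^{(a,b)}_{rr}=d^{(a,b)}_{rr}$ by construction of $c^{(a,b)}_{rr}$, and the $(r,\overline r)$ entry gives $d^a_{rr}\bigl(1-c^{(a,b)}_{rr}\bigr)+d^a_{r\overline r}=1-d^{(a,b)}_{rr}=d^{(a,b)}_{r\overline r}$ using that both $D^a$ and $D^{(a,b)}$ are stochastic.

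The last claim, $\sum_{b\in\mathcal A} m_{ab}\,C^{(a,b)} = I$, is where the content sits. I would check it row by row. For $r=\overline r$, every $C^{(a,b)}$ has $c^{(a,b)}_{\overline r\,\overline r}=1$ and zeros elsewhere, so $\sum_b m_{ab}=1$ yields the right row of $I$. For $r\neq \overline r$ I only need the diagonal entry
\begin{equation*}
\sum_{b\in\mathcal A} m_{ab}\,c^{(a,b)}_{rr} \;=\; \frac{1}{d^a_{rr}}\sum_{b\in\mathcal A} m_{ab}\,d^{(a,b)}_{rr},
\end{equation*}
and the key observation is that
\begin{equation*}
\sum_{b\in\mathcal A} m_{ab}\,d^{(a,b)}_{rr} \;=\; \sum_{b\in\mathcal A} m_{ab}\bigl(1-m^{(a,b)}_{r\overline r}\bigr) \;=\; 1-\mathrm{PD}(a,r) \;=\; d^a_{rr},
\end{equation*}
where I used $\sum_b m_{ab}=1$ together with the definition $\mathrm{PD}(a,r)=\sum_b m_{ab}\,m^{(a,b)}_{r\overline r}$ that ties the conditional matrices to $D^a$. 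The $(r,\overline r)$ entry then vanishes because $c^{(a,b)}_{rr}+c^{(a,b)}_{r\overline r}=1$ and the diagonal sum is already $1$, and off-support entries are zero by inspection.

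The main obstacle, such as it is, will simply be keeping the notation clean when unpacking the definitions — particularly in checking (3), where one must be careful to invoke $\sum_b m_{ab}=1$ exactly once and to recognise $\mathrm{PD}(a,r)$ inside the sum. Everything else is mechanical multiplication of matrices in which the only nonzero off-diagonal entries lie in the default column.
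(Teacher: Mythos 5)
Your proof is correct and follows essentially the same route as the paper's: entrywise verification of the two factorisations, then a row-by-row check of $\sum_b m_{ab}C^{(a,b)}=I$. The only cosmetic difference is in the last step, where the paper rewrites $c_{rr}^{(a,b)}$ as a ratio of conditional probabilities and invokes Lemma \ref{200125:l1} to sum them to one, whereas you obtain the same identity $\sum_b m_{ab}d_{rr}^{(a,b)}=d_{rr}^a$ algebraically by recognising $\mathrm{PD}(a,r)=\sum_b m_{ab}m_{r\overline r}^{(a,b)}$ — which rests on the same conditional-independence assumption, so the two computations are equivalent.
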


\begin{proof}
The equation $M^{(a,b)} = D^{(a,b)}\cdot Q^{(a,b)}$ immediately
follows from the definition of $D^{(a,b)}$ and $Q^{(a,b)}$. 

All matrix elements of $D^a\cdot C^{(a,b)}$ and $D^{(a,b)}$ are 0
except the diagonal elements and the last columns. By definition, 
\begin{equation}
d_{rr}^a c_{rr}^{(a,b)} = d_{rr}^{(a,b)}
\label{200307:1}
\end{equation}
and therefore the matrices $D^a\cdot C^{(a,b)}$ and $D^{(a,b)}$ have
the same diagonal elements. The matrix elements in the last column
satisfy the following equations: 
\[ d_{rr}^a c_{r\overline{r}}^{(a,b)}+d_{r\overline{r}}^a =
d_{rr}^a (1-c_{rr}^{(a,b)})+d_{r\overline{r}}^a =
d_{rr}^a -d_{rr}^{(a,b)}+d_{r\overline{r}}^a =
d_{r\overline{r}}^{(a,b)} \]
and therefore $D^{(a,b)} = D^a\cdot C^{(a,b)}$.

It follows from (\ref{200307:1}) and
\[ d_{rr}^a = \frac{\p(A_t=a,R_t=r,R_{t+1}\not=\overline{r})}{\p(A_t=a,R_t=r)},\ \ \ \ \ \ 
 d_{rr}^{(a,b)} = \frac{\p(A_t=a,A_{t+1}=b,R_t=r,R_{t+1}\not=\overline{r})}{\p(A_t=a,A_{t+1}=b,R_t=r)}  
\]
that
\[ c_{rr}^{(a,b)} =  \frac{\p(A_{t+1}=b|A_t=a,R_t=r,R_{t+1}\not=\overline{r})}{\p(A_{t+1}=b|A_t=a,R_t=r)}. \]
By Lemma \ref{200125:l1},
\[ \sum_{b\in \mathcal A} m_{ab}  c_{rr}^{(a,b)} =
\sum_{b\in \mathcal A} \p(A_{t+1}=b|A_t=a,R_t=r) c_{rr}^{(a,b)} =1 \]
for all $a\in \mathcal A$. Since all matrix elements of $C^{(a,b)}$
are 0 except the diagonal and the last column and each row sum equals
1 we obtain 
\[ \sum_{b\in \mathcal A} m_{ab}  C^{(a,b)}=I. \]
\end{proof}

PIT and TTC rating migration processes have a rather intuitive
interpretation in the context of the decomposition   
\begin{equation*}
    M^{(a,b)} = D^{a}\cdot C^{(a,b)}\cdot Q^{(a,b)}, \quad a,b\in \mathcal
    A, 
  \end{equation*}
given in the proposition above:
\begin{enumerate}
\item  A rating migration process is {\em point-in-time (PIT)} iff
  there exists a stochastic matrix $D\in [0,1]^{J\times J}$ such that
  $D^{a}=D$ for all $a \in\mathcal A$, i.e.,
  \begin{equation*}
    M^{(a,b)} = D\cdot C^{(a,b)}\cdot Q^{(a,b)}, \quad a,b\in \mathcal
    A. 
  \end{equation*}
Note that the matrix $D$, which does not depend on economic states,
specifies the expected default and survival probabilities for each
rating class whereas the economic-state dependent matrix $C^{(a,b)}$
reflects that ex-post default frequencies
$\p(R_{t+1}=\overline{r}|R_t=r,A_t=a,A_{t+1}=b)$ still depend on the
economic state transition.
\item
  A rating migration process is {\em through-the-cycle (TTC)} if there 
  exists a stochastic matrix $Q\in [0,1]^{J\times J}$ such that
  $Q^{(a,b)}=Q$ for all $a, b\in \mathcal A$, i.e., 
  \begin{equation}
    \label{eq:10}
    M^{(a,b)} = D^{(a,b)}\cdot Q=  D^{a} \cdot C^{(a,b)}\cdot Q, \quad a, b\in \mathcal A. 
  \end{equation}
In contrast to rating migrations specified by $Q$ for all economic
states, expected default probabilities are  economic-state dependent
for TTC processes, which is reflected by the specification of the
default component $D^{a}$. 
\end{enumerate}

We will now provide an alternative definition of TTC and PIT rating
migration processes in terms of the asymptotic rating migration matrix
${\widetilde P}_{\infty}$. Analogously to the decomposition of the
conditional rating migration matrices $M^{(a,b)}$ (see Equation
(\ref{240227:1})), ${\widetilde P}_{\infty}$ can be decomposed into a
default component ${\widetilde D}_{\infty}$ and a non-default
component ${\widetilde Q}_{\infty}$ with  
\begin{equation}
    \label{240227:2}
{\widetilde P}_{\infty} = {\widetilde D}_{\infty} \cdot {\widetilde Q}_{\infty}. 
\end{equation}

\begin{theorem} ~\label{200502:c1}
 \begin{enumerate}[(i)]
  \item $(R_t)_{t\in \Nzero}$ is a TTC rating migration process if and
    only if ${\widetilde Q}_{\infty}=Q^{(a,b)}$ for each
    $a,b\in{\mathcal A}$.  
  \item $(R_t)_{t\in \Nzero}$ is a PIT rating migration process if and
    only if ${\widetilde D}_{\infty}=D^{a}$ for each $a\in{\mathcal
      A}$.  
  \end{enumerate}
\end{theorem}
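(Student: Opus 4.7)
The plan is as follows. The reverse implications in both (i) and (ii) are immediate: if $\widetilde Q_{\infty}=Q^{(a,b)}$ for every pair $(a,b)$, then, since the left-hand side is independent of $(a,b)$, all matrices $Q^{(a,b)}$ coincide, which is exactly the TTC property (Definition \ref{def:QTTC}); analogously, if $\widetilde D_{\infty}=D^{a}$ for every $a$, then the common default probability ${\rm PD}(a,r)=d^{a}_{r\overline r}$ does not depend on $a$, which is the PIT property (Definition \ref{def:PIT}). The substantial work is therefore in the two forward directions.

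For (i), I would combine the explicit formula
\[
\tilde p_{\infty,rs}=\frac{1}{\tilde\mu_r}\sum_{a\in\mathcal A}\mu_{(a,r)}\sum_{b\in\mathcal A} m_{ab}\,m^{(a,b)}_{rs},\qquad r\in\widehat{\mathcal R},\ s\in\mathcal R,
\]
from (\ref{eq:2}) with the factorisation $m^{(a,b)}_{rs}=(1-m^{(a,b)}_{r\overline r})\,q^{(a,b)}_{rs}$ coming from Proposition \ref{prop:matrix-decompositionD}. Under the TTC assumption $q^{(a,b)}_{rs}=q_{rs}$ pulls out of both sums, giving, for $r,s\in\widehat{\mathcal R}$,
\[
\tilde p_{\infty,rs}=q_{rs}\cdot\frac{1}{\tilde\mu_r}\sum_{a\in\mathcal A}\mu_{(a,r)}\sum_{b\in\mathcal A} m_{ab}\bigl(1-m^{(a,b)}_{r\overline r}\bigr)=q_{rs}\bigl(1-\tilde p_{\infty,r\overline r}\bigr),
\]
where the last equality uses $\sum_{b}m_{ab}=1$. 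Dividing by $1-\tilde p_{\infty,r\overline r}$ yields $\tilde q_{\infty,rs}=q_{rs}=q^{(a,b)}_{rs}$ for every $a,b$, i.e.\ $\widetilde Q_{\infty}=Q^{(a,b)}$.

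For (ii), the key observation is that the $\overline r$-column of $\widetilde P_{\infty}$ is a convex combination of the conditional default probabilities,
\[
\tilde p_{\infty,r\overline r}=\frac{1}{\tilde\mu_r}\sum_{a\in\mathcal A}\mu_{(a,r)}\,{\rm PD}(a,r),\qquad r\in\widehat{\mathcal R},
\]
since $\sum_b m_{ab}m^{(a,b)}_{r\overline r}={\rm PD}(a,r)$ by definition. Under PIT the value ${\rm PD}(a,r)$ is independent of $a$, so the convex combination collapses to the common value $d^{a}_{r\overline r}={\rm PD}(a,r)$. Because both $\widetilde D_{\infty}$ and $D^a$ have nonzero entries only on the diagonal and in the $\overline r$-column and the two $\overline r$-columns now coincide, stochasticity of each row forces the diagonal entries to agree as well, hence $\widetilde D_{\infty}=D^a$ for every $a$. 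The only point needing care in the whole argument is to verify that the decomposition (\ref{240227:2}) of $\widetilde P_{\infty}$ is performed using exactly the same formulas as in Section \ref{sec:decomp-into-rating}, so that the matrices $\widetilde Q_{\infty}$ and $\widetilde D_{\infty}$ really do coincide with the objects being computed above; once this is settled, the proof is essentially linear algebra and I do not anticipate any substantial obstacle.
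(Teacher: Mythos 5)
Your proof is correct, and it reaches the conclusion by a slightly different route than the paper. The paper deduces the theorem from Lemma \ref{200218:t1}, which establishes the stronger finite-horizon statement that under TTC the non-default component of \emph{every} ${\widetilde P}_t$, $t>0$, equals $Q$, and under PIT the default component of every ${\widetilde P}_t$ equals $D$; the asymptotic claim then follows by letting $t\to\infty$ and invoking the convergence ${\widetilde P}_t\to{\widetilde P}_\infty$ from Theorem \ref{170621:t2}. You instead work directly with the closed-form expression \eqref{eq:2} for ${\widetilde P}_\infty$ in terms of the Perron eigenvector $\mu$, so no limiting argument is needed. The algebra is structurally identical in both cases --- a convex combination over economic states with weights $\lambda_{t-1,(a,r)}$ in the paper versus $\mu_{(a,r)}$ in your version, followed by the observation that $\sum_b m_{ab}=1$ collapses the sum --- so your route is marginally more self-contained, while the paper's route yields the finite-$t$ statement as a by-product, which it uses elsewhere. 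Your handling of the converse directions (constancy of $Q^{(a,b)}$, resp.\ $D^a$, across states is literally the definition of TTC, resp.\ PIT) matches what the paper leaves implicit, and your closing caveat about checking that the decomposition \eqref{240227:2} uses the same formulas as Section \ref{sec:decomp-into-rating} is exactly the right point to verify; it holds because ${\tilde q}_{\infty,rs}={\tilde p}_{\infty,rs}/(1-{\tilde p}_{\infty,r\overline r})$ and the division is licit since $m^{(a,b)}_{r\overline r}<1$ for all $r\in\widehat{\mathcal R}$.
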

Note that the condition in Theorem \ref{200502:c1}(ii) can also be
expressed as follows. 

\begin{corollary} ~\label{221115:c1}
$(R_t)_{t\in \Nzero}$ is a PIT rating migration process if and only if
the default vector of ${\widetilde P}_{\infty}$ equals the PD vector 
$({{\rm PD}(a,r))}_{r\in{\mathcal R}}$
for all economic states $a\in \mathcal A$.
\end{corollary}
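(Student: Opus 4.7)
The plan is to deduce this corollary directly from Theorem~\ref{200502:c1}(ii), which characterises PIT processes by ${\widetilde D}_\infty = D^a$ for each $a\in\mathcal A$. I need to show that this matrix equality (for every $a$) is equivalent to the last column of ${\widetilde P}_\infty$ coinciding with the vector $({\rm PD}(a,r))_{r\in \mathcal R}$ for every $a\in\mathcal A$.

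First, I would verify that the default column of ${\widetilde P}_\infty$ coincides with the default column of ${\widetilde D}_\infty$. Using the decomposition ${\widetilde P}_\infty = {\widetilde D}_\infty \cdot {\widetilde Q}_\infty$ from \eqref{240227:2}, and the structural property of ${\widetilde Q}_\infty$ inherited from \eqref{220827:1} (namely that its last column is zero except for a $1$ in the $(\overline{r},\overline{r})$ slot, since this property is preserved under the mixing used to construct ${\widetilde P}_\infty$ in \eqref{eq:2}), the product $({\widetilde D}_\infty \cdot {\widetilde Q}_\infty)_{r\overline{r}}$ collapses to $\tilde{d}_{\infty, r\overline{r}}$ for every $r\in \widehat{\mathcal R}$, with $\tilde{p}_{\infty,\overline r\,\overline r}=1$.

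Next, I would observe that both $D^a$ and ${\widetilde D}_\infty$ are stochastic matrices whose only nonzero entries sit on the diagonal and in the last column (the off-diagonal diagonal-row entries being determined by the default entry via $d^a_{rr}= 1 - d^a_{r\overline r}$). Hence such matrices are completely determined by their default columns. In particular, the default column of $D^a$ is by construction $({\rm PD}(a,r))_{r\in \mathcal R}$, so ${\widetilde D}_\infty = D^a$ is equivalent to $\tilde{d}_{\infty, r\overline{r}} = {\rm PD}(a,r)$ for all $r\in \mathcal R$.

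Combining the two steps, ${\widetilde D}_\infty = D^a$ for every $a\in\mathcal A$ is equivalent to $\tilde{p}_{\infty, r\overline{r}} = {\rm PD}(a,r)$ for all $r\in\mathcal R$ and all $a\in\mathcal A$, which is precisely the statement of the corollary when combined with Theorem~\ref{200502:c1}(ii). There is essentially no obstacle here; the only point requiring care is the bookkeeping that confirms ${\widetilde Q}_\infty$ has the structural property needed so that the default column of ${\widetilde P}_\infty$ reproduces the default column of ${\widetilde D}_\infty$, after which the equivalence is immediate.
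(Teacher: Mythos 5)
Your argument is correct and matches the paper's intent exactly: the paper presents this corollary as an immediate restatement of Theorem~\ref{200502:c1}(ii), relying on precisely the two observations you make — that the default column of ${\widetilde P}_{\infty}$ coincides with that of ${\widetilde D}_{\infty}$ because ${\widetilde Q}_{\infty}$ has last column $(0,\ldots,0,1)^{T}$, and that matrices of the form $D^a$ and ${\widetilde D}_{\infty}$ are determined by their default columns, with the default column of $D^a$ being $({\rm PD}(a,r))_{r\in\mathcal R}$ by construction. Your write-up simply makes explicit the bookkeeping the paper leaves implicit.
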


The proof of Theorem \ref{200502:c1} is an immediate consequence of
the following lemma, which provides a characterisation of the rating
migration matrices ${\widetilde P}_t$ of the time-inhomogeneous Markov
process $({\widetilde R}_t)_{t\in \Nzero}$ if $(R_t)_{t\in \Nzero}$ is
PIT or TTC. 

\begin{lemma} ~\label{200218:t1}
 \begin{enumerate}[(i)]
  \item Let $(R_t)_{t\in \Nzero}$ be a TTC rating migration process
    and let $Q$ denote the uniquely defined non-default component of
    the rating transition matrices $M^{(a,b)}$. Then $Q$ is the
    non-default component of each of the rating transition matrices
    ${\widetilde P}_t$ for $t>0$.  
  \item Let $(R_t)_{t\in \Nzero}$ be a PIT rating migration process
    and write each rating transition matrix $M^{(a,b)}$ in the form  
\[ M^{(a,b)}= D\cdot C^{(a,b)}\cdot Q^{(a,b)}. \]
Then $D$ is the default component of each of the rating transition
matrices ${\widetilde P}_t$ for $t>0$.  
  \end{enumerate}
\end{lemma}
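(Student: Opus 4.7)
The plan is to unwind the construction of $\widetilde P_t$ from Proposition \ref{200503:p1} and feed in the structural identities for TTC and PIT decompositions (the latter resting on the identity $\sum_b m_{ab} C^{(a,b)} = I$ from Proposition \ref{prop:matrix-decompositionD}). Concretely, from the proof of Proposition \ref{200503:p1}, for $r \in \widehat{\mathcal R}$ with $\p(R_{t-1}=r) > 0$ and any $s \in \mathcal R$,
\[
  {\tilde p}_{t,rs} \;=\; \sum_{a\in \mathcal A} w_{t-1,a,r}\, \sum_{b\in \mathcal A} m_{ab}\, m_{rs}^{(a,b)}, \qquad w_{t-1,a,r} := \p(A_{t-1}=a \mid R_{t-1}=r),
\]
so if one introduces the economy-averaged migration matrix $\overline{M}^{a} := \sum_{b\in \mathcal A} m_{ab}\, M^{(a,b)}$, then the $r$th row of $\widetilde P_t$ is a convex combination, with weights $w_{t-1,a,r}$, of the $r$th rows of the matrices $\overline{M}^{a}$.

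For part (i), I would invoke the TTC factorisation $M^{(a,b)} = D^a \cdot C^{(a,b)} \cdot Q$ to get $\overline{M}^{a} = D^{a}\bigl(\sum_b m_{ab} C^{(a,b)}\bigr) Q = D^{a} \cdot Q$, using $\sum_b m_{ab} C^{(a,b)} = I$ from Proposition \ref{prop:matrix-decompositionD}. Substituting and using that $D^a$ has nonzero entries only on the diagonal and in the last column, the $r$th row of $\widetilde P_t$ becomes $\tilde d_{t,rr}$ times the $r$th row of $Q$ plus a mass $\tilde d_{t,r\overline r} = 1 - \tilde d_{t,rr}$ on $\overline r$, where $\tilde d_{t,rr} := \sum_a w_{t-1,a,r}\, d^{a}_{rr}$. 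That gives $\widetilde P_t = \widetilde D_t \cdot Q$ with $\widetilde D_t$ of default-component form, which by uniqueness of the decomposition \eqref{240227:1} identifies $Q$ as the non-default component of $\widetilde P_t$.

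For part (ii), the work concentrates on the default column only. The default component of $\widetilde P_t$ is determined by the entries ${\tilde p}_{t,r\overline r}$, and from the formula above,
\[
  {\tilde p}_{t,r\overline r} \;=\; \sum_{a\in \mathcal A} w_{t-1,a,r}\, \sum_{b\in \mathcal A} m_{ab}\, m_{r\overline r}^{(a,b)} \;=\; \sum_{a\in \mathcal A} w_{t-1,a,r}\, \mathrm{PD}(a,r).
\]
In the PIT case, $\mathrm{PD}(a,r)$ is independent of $a$ by Definition \ref{def:PIT}, so this collapses to $\mathrm{PD}(r) = d_{r\overline r}$, which is exactly the $(r,\overline r)$ entry of $D$. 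Since $D$ is the unique default-component matrix with column $(d_{r\overline r})_{r\in \mathcal R}$, this identifies $D$ as the default component of $\widetilde P_t$ for all $t > 0$.

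The only subtlety worth highlighting is handling rows $r$ with $\p(R_{t-1}=r) = 0$, where $\widetilde P_t$ is set to the trivial default-absorbing row; these rows are consistent with any choice of non-default component $Q$ and default component $D$, so they do not affect either identification. The main conceptual obstacle is recognising that the TTC direction reduces cleanly via the mixing identity $\sum_b m_{ab} C^{(a,b)} = I$, whereas the PIT direction needs no such reduction and follows directly from the constancy of $\mathrm{PD}(a,r)$ in $a$.
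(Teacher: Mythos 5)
Your proof is correct and follows essentially the same route as the paper: both unwind the formula for ${\tilde p}_{t,rs}$ from the proof of Proposition \ref{200503:p1} and substitute the respective factorisations, with part (ii) resting in both cases on the constancy of ${\rm PD}(a,r)$ in $a$ (equivalently $\sum_b m_{ab}D^{(a,b)}=D$). The only difference is cosmetic: in part (i) you close the argument via the mixing identity $\sum_b m_{ab}C^{(a,b)}=I$ applied to the three-factor decomposition, whereas the paper substitutes $m_{rs}^{(a,b)}=q_{rs}(1-m_{r\overline r}^{(a,b)})$ directly and identifies the scalar $c_r=1-\tilde p_{t,r\overline r}$ by row normalisation; both are valid.
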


\begin{proof}
  (i) Let $r,s\in\widehat{\mathcal R}$ and denote the distribution of
  $X_t$ by $\lambda_t\in [0,1]^{KJ}$, i.e., $ \lambda_{t,(a,r)}=
  \p(X_t=(a,r))$ for $(a,r)\in \mathcal X$ and $t\in \Nzero$. Using
  the expression of $\tilde p_{t,rs}$ from the proof of Proposition
  \ref{200503:p1}, we obtain
  \begin{align*}
    \tilde{p}_{t, r s}
    &=\frac{\sum_{a, b \in \mathcal A} p_{(a, r)(b, s)} \cdot \lambda_{t-1,(a, r)}}{\sum_{a \in \mathcal A} \lambda_{t-1,(a, r)}} %
 =\frac{\sum_{a, b \in \mathcal A} m_{a, b} m_{r s}^{(a, b)} \lambda_{t-1,(a, r)}}{\sum_{a \in \mathcal A}\lambda_{t-1,(a, r)}} \\[5pt]
& =\frac{\sum_{a, b \in \mathcal A} m_{a, b} q_{rs}(1-m_{r \overline{r}}^{(a, b)}) \lambda_{t-1,(a, r)}}{\sum_{a \in \mathcal A} \lambda_{t-1,(a, r)}} %
  =q_{rs} c_r,
\end{align*}
where $c_r$ is defined by
\[ c_r=\frac{\sum_{a, b \in \mathcal A} m_{a, b}(1-m_{r
   \overline{r}}^{(a, b)}) \lambda_{t-1,(a, r)}}{\sum_{a \in \mathcal A} \lambda_{t-1,(a, r)}}.
\]
Since $\sum_{s \in \hat{R}} q_{rs} c_r+\tilde p_{t, r\overline r}
=1$, it follows, together with $\sum_{s\in \hat R} q_{rs}=1$, that 
$c_r=1-\tilde p_{t,r\overline r}$. As a consequence, $\tilde p_{t, r
  s}=q_{r s}\left(1-\tilde p_{t, r \overline r}\right)$ and therefore
$q_{r s}=\tilde p_{t, r s}/(1-\tilde p_{t, r\overline r})$.  

\medskip
\noindent
(ii) It follows from Proposition \ref{prop:matrix-decompositionD} that
for every $a\in \mathcal A$ 
\begin{equation*} 
 \sum_{b\in \mathcal A} m_{ab} D^{(a,b)}=D\sum_{b\in \mathcal A} m_{ab} C^{(a,b)}=D. 
\end{equation*}
Hence, for $r\in {\mathcal R}$,
\begin{equation} \label{231231:1}
  {\tilde p}_{t,r\overline{r}}
    = \frac{\sum_{a,b\in \mathcal A} p_{(a,r)(b,\overline{r})} 
\lambda_{t-1}(a,r)} {\sum_{a\in \mathcal A} \lambda_{t-1}(a,r)} %
= \frac{\sum_{a\in \mathcal A}
\lambda_{t-1}(a,r) \sum_{b\in \mathcal A} m_{ab} d_{r\overline{r}}^{(a,b)} } {\sum_{a\in \mathcal A} \lambda_{t-1}(a,r)}
= d_{r\overline{r}}.
\end{equation}
The default component ${\widetilde D}_t$ of ${\widetilde P}_t$ is a
stochastic matrix and all matrix element are zero except the diagonal
elements and the last column, which equals $({\tilde
  p}_{t,r\overline{r}})_{r\in {\mathcal R}}$. Hence, by
(\ref{231231:1}), ${\widetilde D}_t=D$ for every $t>0$. 
\end{proof}

We finish this subsection by showing an interesting property of TTC
rating migration processes. In Section \ref{sec:markovian}, a class of
Markovian rating processes has been identified. These processes are
characterized by the property that their rating migration matrices
$M^{(a,b)}$ have identical ratios of non-default migration
probabilities. It is easy to prove that this class of rating processes
is a subclass of TTC rating migration processes. 

\begin{definition}
 The conditional transition matrices $M^{(a,b)}$
have {\em identical survival ratios} if for all $r,s\in
\widehat{\mathcal R}$ and $a,a',b,b'\in \mathcal A$ 
  \begin{equation}
\frac{d_{rr}^{(a,b)}}{d_{ss}^{(a,b)}}=\frac{d_{rr}^{(a',b')}}{d_{ss}^{(a',b')}},
    \label{200123:2}
  \end{equation}
where $D^{(a,b)}=(d_{rs}^{(a,b)})$ is the default component of
$M^{(a,b)}$, see Equation (\ref{240227:1}). 
\end{definition}

\begin{proposition} ~\label{200126:l1}
 The following conditions are equivalent for a rating migration
 process $(R_t)_{t\in \Nzero}$. 
\begin{enumerate}[(i)]
  \item The matrices $M^{(a,b)}$ have identical ratios of non-default
    migration probabilities (\ref{200126:1}). 
  \item $(R_t)_{t\in \Nzero}$ is TTC and the matrices $M^{(a,b)}$ have
    identical survival ratios (\ref{200123:2}). 
  \end{enumerate}
\end{proposition}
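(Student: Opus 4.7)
The plan is to reduce both conditions to statements about the two factors of the decomposition $M^{(a,b)} = D^{(a,b)}\cdot Q^{(a,b)}$, using the fact that for $r,s\in\widehat{\mathcal R}$ one has
\[
m_{rs}^{(a,b)} = d_{rr}^{(a,b)}\, q_{rs}^{(a,b)},
\]
since $D^{(a,b)}$ is diagonal on non-default states. This factorisation is the only algebraic input required, so the proof is essentially a careful bookkeeping of ratios.

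For the direction (ii) $\Rightarrow$ (i), I would substitute $q_{rs}^{(a,b)} = q_{rs}$ (by TTC) into the ratio
\[
\frac{m_{rs}^{(a,b)}}{m_{r's'}^{(a,b)}} = \frac{d_{rr}^{(a,b)}}{d_{r'r'}^{(a,b)}} \cdot \frac{q_{rs}}{q_{r's'}}.
\]
The first factor is independent of $(a,b)$ by the identical survival ratios hypothesis (\ref{200123:2}), and the second factor does not depend on $(a,b)$ at all. Hence the ratio is independent of $(a,b)$, giving (\ref{200126:1}).

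For the harder direction (i) $\Rightarrow$ (ii), the plan is two-step: first extract TTC, then extract identical survival ratios. To isolate $Q^{(a,b)}$ from $D^{(a,b)}$, I would specialise (\ref{200126:1}) to $r=r'$, which cancels the $d_{rr}^{(a,b)}$ factor and yields
\[
\frac{q_{rs}^{(a,b)}}{q_{rs'}^{(a,b)}} = \frac{q_{rs}^{(a',b')}}{q_{rs'}^{(a',b')}}
\quad\text{for all } r,s,s'\in\widehat{\mathcal R},\ a,a',b,b'\in\mathcal A.
\]
Thus for each fixed $r$ the row $(q_{rs}^{(a,b)})_{s\in\widehat{\mathcal R}}$ is proportional across different $(a,b)$; since each such row sums to $1$ (the non-default part of the stochastic matrix $Q^{(a,b)}$, as $q_{r\overline{r}}^{(a,b)}=0$ and $\sum_{s\in\widehat{\mathcal R}} m_{rs}^{(a,b)} = 1 - m_{r\overline{r}}^{(a,b)} = d_{rr}^{(a,b)}$), the proportionality constant must be $1$. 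This forces $q_{rs}^{(a,b)} = q_{rs}$ independently of $(a,b)$, i.e. TTC.

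Having established TTC, I would substitute $q_{rs}^{(a,b)}=q_{rs}$ back into the original condition (\ref{200126:1}):
\[
\frac{d_{rr}^{(a,b)} q_{rs}}{d_{r'r'}^{(a,b)} q_{r's'}} = \frac{d_{rr}^{(a',b')} q_{rs}}{d_{r'r'}^{(a',b')} q_{r's'}},
\]
from which the $q$ factors cancel to yield the identical survival ratios condition (\ref{200123:2}) directly. There is no genuine obstacle here; the only subtlety is the first specialisation $r=r'$ and the use of the row-sum normalisation of $Q^{(a,b)}$, which is what makes constancy of the ratios force constancy of the rows themselves.
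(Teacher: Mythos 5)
Your proof is correct and follows essentially the same route as the paper: both directions rest on the factorisation $m_{rs}^{(a,b)}=d_{rr}^{(a,b)}q_{rs}^{(a,b)}$ together with the normalisation $\sum_{s'\in\widehat{\mathcal R}}m_{r's'}^{(a,b)}=d_{r'r'}^{(a,b)}$. The only cosmetic difference is in extracting TTC from (i): you specialise to $r=r'$ and argue via proportional rows of $Q^{(a,b)}$ summing to one, whereas the paper sums the cross-multiplied identity $m_{rs}^{(a,b)}m_{r's'}^{(a',b')}=m_{rs}^{(a',b')}m_{r's'}^{(a,b)}$ over $s'$ and then sets $r'=r$ and $r'=s$ --- the same normalisation used in a slightly different order.
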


\begin{proof} 
``$\Rightarrow$'':  Let $r,r',s\in \widehat{\mathcal R}$ and
$a,a',b,b'\in \mathcal A$. It follows from (\ref{200126:1}) that 
\[ 
m_{rs}^{(a,b)} d_{r'r'}^{(a',b')} = 
\sum_{s'\in\widehat{\mathcal R}} m_{rs}^{(a,b)} m_{r's'}^{(a',b')}= 
\sum_{s'\in\widehat{\mathcal R}} m_{rs}^{(a',b')} m_{r's'}^{(a,b)}=
m_{rs}^{(a',b')} d_{r'r'}^{(a,b)}.
\]
By setting $r'$ to $r$ and to $s$ respectively we obtain 
\[ 
  q_{rs}^{(a,b)}=\frac{m_{rs}^{(a,b)}}{d_{rr}^{(a,b)}}
  =\frac{m_{rs}^{(a',b')}}{d_{rr}^{(a',b')}}=q_{rs}^{(a',b')}
~\hbox{ and }~ 
\frac{d_{rr}^{(a,b)}}{d_{rr}^{(a',b')}}=
\frac{m_{rs}^{(a,b)}}{m_{rs}^{(a',b')}}=
\frac{d_{ss}^{(a,b)}}{d_{ss}^{(a',b')}}.
\]

``$\Leftarrow$'': Let $r,r',s,s'\in \widehat{\mathcal R}$ and
$a,a',b,b'\in \mathcal A$. Since $(R_t)_{t\in \Nzero}$ is TTC there
exists a matrix $Q=(q_{rs})\in [0,1]^{J\times J}$ such that
$m_{rs}^{(a,b)}=q_{rs}d_{rr}^{(a,b)}$. Together with (\ref{200123:2}), 
\[
  \frac{m_{rs}^{(a,b)}}{m_{r's'}^{(a,b)}}=\frac{q_{rs}d_{rr}^{(a,b)}}{q_{r's'}d_{r'r'}^{(a,b)}}
  =\frac{q_{rs}d_{rr}^{(a',b')}}{q_{r's'}d_{r'r'}^{(a',b')}}=\frac{m_{rs}^{(a',b')}}{m_{r's'}^{(a',b')}}. \] 
\end{proof}

\section{Ordered rating processes}
\label{sec:ordered-ratings}
\subsection{Ordered rating distributions}
In this section, we formalize the notions of ``better'' or ``worse''
economic and rating states. A rating $r$ is typically considered
better than a rating $s$ if the multi-year PDs of $r$ are consistently
smaller than the multi-year PDs of $s$. In order to satisfy this
condition, not only the 1y PDs of $r$ and $s$ have to be ordered
accordingly but the rating migration matrices also have to meet
certain requirements. Informally, the probability that an $r$-rated
company is downgraded to a bad rating has to be lower than the
corresponding migration probability for an $s$-rated company. It is
well-know that these conditions on migration probabilities can be
formalised by monotonicity properties of rating migration matrices,
where a migration matrix is called monotone with respect to a given
(partial) order on the set of rating distributions if the order is
preserved under multiplication with the respective migration
matrix. The main objective of the section is to study conditions on
the transition matrices $M$, $M^{(a,b)}$ and $P$ to ensure
monotonicity of the matrices with respect to a given order.

Our starting point is the formal specification of a total order on the
set ${\mathcal R}$ of ratings. We assume  
that the vector of asymptotic PDs ${({\tilde
    p}_{r\overline{r}})}_{r\in{\mathcal R}}$ consists of $J$ distinct values
and introduce the total order $\leq$ on ${\mathcal R}$ by $r \leq r'$
if ${\tilde p}_{r\overline{r}}\leq {\tilde p}_{r'\overline{r}}$ for
$r,r'\in{\mathcal R}$. Note that the default rating $\overline{r}$ is
the biggest element, i.e., $r\leq\overline{r}$ for all $r\in{\mathcal
  R}$.  

Consider rating distributions $\lambda$ and $\mu$ represented by their
density functions, i.e., $\lambda={(\lambda_r)}_{r\in{\mathcal R}}$
and $\mu={(\mu_r)}_{r\in{\mathcal R}}$ are elements of $[0,1]^J$ with
$\sum_{r\in{\mathcal R}}\lambda_r=\sum_{r\in{\mathcal R}}\mu_r=1$. The
distribution $\mu$ is typically considered to be more risky than
$\lambda$ if 
for all $r\in{\mathcal R}$
\begin{equation} 
\sum_{r\leq s}\lambda_s\leq\sum_{r\leq s} \mu_s, \label{231028:1}
\end{equation}
see \citet{Jarrow1997}. The partial order on rating distributions
defined by (\ref{231028:1}) is called {\em stochastic dominance} or
{\em usual stochastic order}, written
$\lambda\leq_{st}\mu$. It is obvious that 
 (\ref{231028:1}) is equivalent to
 $\bar{F}_{\lambda}(r)\leq\bar{F}_{\mu}(r)$ for all $r\in{\mathcal
   R}$, where $\bar{F}_{\lambda}(r)=\sum_{r\leq s}\lambda_s$ denotes
 the survival function of $\lambda$.
 In the context of a rating transition matrix, where each row
 represents a rating distribution, tail probabilities thus induce the
 notion of ``better'' and ``worse'' ratings. 
Another condition equivalent to (\ref{231028:1}) is
\begin{equation} 
\lambda x^T\leq \mu x^T \hbox{ for each increasing real-valued
  sequence } x=(x_1,\ldots,x_J)\in {\mathbb R}^J,  \label{231028:2} 
\end{equation}
see \citep{Mueller2002,Foellmer2002,Shaked2007,Rolski2009}. 

A stochastic matrix $L\in [0,1]^{J\times J}$ is called {\em monotone with
respect to a partial order $\leq$\/} on the set of distributions
if $\lambda L\leq \mu L$ for all distributions with
$\lambda\leq \mu$, see e.g.\ \citep{Rolski2009}. In particular, a
stochastic matrix $L$ is called {\em stochastically monotone\/} if it
is monotone with respect to $\leq_{st}$. It is easy to show (Theorem
7.4.1 of \cite{Rolski2009}) that $L$ is stochastically monotone iff 
\begin{equation} 
r\leq s \hbox{ implies } l_r\leq_{st}l_s \hbox{ for all } r,s\in
{\mathcal R},  \label{231028:3} 
\end{equation}
where the distribution $l_r$ denotes the $r$-th row of matrix $L$.

\medskip
We now apply the concept of stochastic monotonicity to the
(conditional) rating migration matrices $M^{(a,b)}\in [0,1]^{J\times
  J}$. It follows immediately from the definition that the $M^{(a,b)}$
are stochastically monotone if and only if the order relation
$\leq_{st}$ is preserved on each economic path. 

\begin{theorem} ~\label{231103:t1}
The following two conditions are equivalent:
\begin{enumerate}[(i)]
  \item  For each $a,b\in {\mathcal A}$, the (conditional) rating
    migration matrix $M^{(a,b)}\in [0,1]^{J\times J}$ is
    stochastically monotone. 
\item For all rating distributions $\lambda,\mu\in [0,1]^J$ and each
  path  
$a_0,\ldots,a_T\in {\mathcal A}$ of economic states,
\[ \lambda\leq_{st} \mu \hbox{ implies } \lambda \prod_{t=0}^{T-1}
  M^{(a_t,a_{t+1})}\leq_{st} \mu \prod_{t=0}^{T-1}
  M^{(a_t,a_{t+1})}. \] 
  \end{enumerate}
\end{theorem}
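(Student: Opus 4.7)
The plan is to prove the equivalence by observing that condition (ii) for a single-step path ($T=1$) is, by definition, precisely the stochastic monotonicity of $M^{(a_0,a_1)}$, while for longer paths it follows by a straightforward induction based on the fact that the product of finitely many stochastically monotone matrices is again stochastically monotone.

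For the direction (ii) $\Rightarrow$ (i), I would simply take $T = 1$ and fix arbitrary $a,b\in\mathcal A$, choose the two-step path $a_0 = a, a_1 = b$, and note that the implication in (ii) reduces to $\lambda\leq_{st}\mu \Rightarrow \lambda M^{(a,b)}\leq_{st}\mu M^{(a,b)}$, which is exactly the definition of stochastic monotonicity of $M^{(a,b)}$.

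For the direction (i) $\Rightarrow$ (ii), I would proceed by induction on the length $T$ of the economic path. For $T = 0$ the product is the empty product (i.e., the identity matrix) and the claim $\lambda\leq_{st}\mu$ holds by assumption. For the inductive step, assume the implication holds for paths of length $T-1$. Given a path $a_0,\ldots,a_T$ and $\lambda\leq_{st}\mu$, apply stochastic monotonicity of $M^{(a_0,a_1)}$ from (i) to obtain $\lambda M^{(a_0,a_1)}\leq_{st}\mu M^{(a_0,a_1)}$. Then apply the induction hypothesis to the path $a_1,\ldots,a_T$ with the new distributions $\lambda M^{(a_0,a_1)}$ and $\mu M^{(a_0,a_1)}$ to conclude that $\lambda\prod_{t=0}^{T-1}M^{(a_t,a_{t+1})}\leq_{st}\mu\prod_{t=0}^{T-1}M^{(a_t,a_{t+1})}$. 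This works because stochastic dominance is preserved by right-multiplication with a stochastically monotone stochastic matrix — a standard fact that can be justified using the equivalent characterisation~(\ref{231028:3}) or the sequence-based characterisation~(\ref{231028:2}).

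There is no real obstacle here: the proof is essentially a bookkeeping induction combined with the elementary observation that (ii) for $T=1$ is definitionally (i). The only mild subtlety is to note that $\lambda M^{(a_0,a_1)}$ remains a probability distribution (since $M^{(a_0,a_1)}$ is stochastic), so the induction hypothesis indeed applies to it.
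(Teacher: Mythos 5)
Your proof is correct and matches the paper's (implicit) argument: the paper states the theorem as an immediate consequence of the definition of stochastic monotonicity, and your reduction of (ii)$\Rightarrow$(i) to the case $T=1$ together with the induction for (i)$\Rightarrow$(ii), using that $\lambda M^{(a_0,a_1)}$ remains a distribution, is exactly the bookkeeping the paper leaves to the reader.
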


\subsection{Ordered distributions of economic states and ratings}
We will now order the set of distributions on the product space 
$\mathcal X=\mathcal A\times \mathcal R$ of economic states and
ratings. First we fix a total order $\leq$ on the set of economic
states ${\mathcal A}$. The total orders on ${\mathcal A}$ and
${\mathcal R}$ are now extended to the (partial) product order $\leq$
on ${\mathcal X}$: $(a,r)\leq (b,s)$ in ${\mathcal X}$ if $a\leq b$ in
${\mathcal A}$ and $r\leq s$ in ${\mathcal R}$.   

Let $\lambda=(\lambda_{a,r})_{(a,r)\in{\mathcal X}}$ and
$\mu=(\mu_{a,r})_{(a,r)\in{\mathcal X}}$ be distributions on
${\mathcal X}$, i.e.,  elements of $[0,1]^{KJ}$ with
$\sum_{(a,r)\in{\mathcal X}}\lambda_{a,r}=1$. The  following
definition generalizes the concept of stochastic dominance
(or the usual stochastic order) to
distributions on ${\mathcal X}$: the distribution $\mu$ {\em
  stochastically dominates\/} $\lambda$, written
$\lambda\leq_{st}\mu$, if   
\begin{equation}
\sum_{(a,r)\in U} \lambda_{a,r}\leq  \sum_{(a,r)\in U} \mu_{a,r},\ \ \
\ \ \hbox{for all upper sets } U\subseteq {\mathcal
  X}, \label{230327:2} 
\end{equation}
where $U\subseteq {\mathcal X}$ is called an upper set if $(a,r)\in U$ 
and $(a,r)\leq (b,s)$ implies $(b,s)\in U$.\footnote{
  Upper sets of the ordered set $\{1,2\}\times \{1,2,3\}$
\footnotesize{with Hasse diagram}
\tiny{\begin{tikzpicture}[baseline=10mm,scale=0.75, nodes={}]
  \node (a1b1) at (1,0) {$(1,1)$};
  \node (a1b2) at (0,1) {$(1,2)$};
  \node (a1b3) at (-1,2) {$(1,3)$};
  \node (a2b1) at (2,1) {$(2,1)$};
  \node (a2b2) at (1,2) {$(2,2)$};
  \node (a2b3) at (0,3) {$(2,3)$};

  \draw (a1b1) -- (a1b2);
  \draw (a1b2) -- (a1b3);
  \draw (a1b1) -- (a2b1);
  \draw (a1b2) -- (a2b2);
  \draw (a1b3) -- (a2b3);
  \draw (a2b1) -- (a2b2);
  \draw (a2b2) -- (a2b3);
  
\end{tikzpicture}}:
\\
  \tiny{$\{\},\left\{
\begin{array}{c}
 (2,3) \\
\end{array}
\right\},\left\{
\begin{array}{c}
 (2,3) \\
 (1,3) \\
\end{array}
\right\},\left\{
\begin{array}{c}
 (2,3) \\
 (2,2) \\
\end{array}
\right\},\left\{
\begin{array}{c}
 (2,3) \\
 (2,2) \\
(1,3) \\
\end{array}
\right\},\left\{
\begin{array}{c}
 (2,3) \\
 (2,2) \\
(2,1) \\
\end{array}
\right\},\left\{
\begin{array}{c}
(2,3) \\
 (2,2) \\
(1,3) \\
(1,2) \\
\end{array}
\right\},\left\{
\begin{array}{c}
 (2,3) \\
 (2,2) \\
(2,1) \\
(1,3) \\
\end{array}
\right\},\left\{
\begin{array}{c}
 (2,3) \\
 (2,2) \\
(2,1) \\
(1,3) \\
(1,2) \\
\end{array}
\right\},\left\{
\begin{array}{c}
 (2,3) \\
 (2,2) \\
(2,1) \\
(1,3) \\
(1,2) \\
(1,1) \\
\end{array}
\right\}$}.}
As for rating distributions, stochastic dominance on ${\mathcal X}$
can be characterized via increasing real-valued sequences, see e.g.\
Theorem 3.3.4 of \cite{Mueller2002}. More precisely, the relation
$\lambda\leq_{st}\mu$ holds for distributions on ${\mathcal X}$ iff
$\lambda x^T\leq \mu x^T$ for each increasing real-valued sequence
$x=(x_{a,r})_{(a,r)\in{\mathcal X}}$ in ${\mathbb R}^{KJ}$, where $x$
is called increasing if $(a,r)\leq (b,s)$ implies $x_{a,r}\leq
x_{b,s}$.

Another potential generalisation of the partial order on rating
distributions defined in (\ref{231028:1}) is formalised by the concept
of upper orthant orders. In our setup, the upper orthant order on
${\mathcal X}$ can be defined as follows. For $(a,r)\in{\mathcal X}$
let $U_{(a,r)}$ denote the upper set generated by $(a,r)$, i.e.,
$U_{(a,r)}=\{(b,s)\in{\mathcal X}\mid (a,r)\leq (b,s)\}$ and let
$\lambda=(\lambda_{a,r})_{(a,r)\in{\mathcal X}}$ and
$\mu=(\mu_{a,r})_{(a,r)\in{\mathcal X}}$ be distributions on
${\mathcal X}$. 
The {\em upper orthant order\/} $\leq_{uo}$ is defined by
$\lambda\leq_{uo}\mu$ if
\begin{equation} 
\sum_{(b,s)\in U_{(a,r)}} \lambda_{b,s}\leq  \sum_{(b,s)\in U_{(a,r)}}
\mu_{b,s}\ \ \ \ \ \hbox{for all } (a,r)\in{\mathcal
  X}. \label{231104:1} 
\end{equation}
Note that the definition of $\leq_{uo}$ is obtained by restricting the
upper sets in the definition (\ref{230327:2}) of $\leq_{st}$ to upper
sets that are generated by single elements $(a,r)\in{\mathcal
  X}$.\footnote{%
  From the list of upper sets in the previous footnote, this excludes
  the first, fifth, eigth and ninth sets. 
  } It
is an immediate consequence that $\lambda \leq_{st} \mu$ implies
$\lambda\leq_{uo} \mu$. Again, it is obvious that (\ref{231104:1}) is
equivalent to $\bar{F}_{\lambda}(a,r)\leq\bar{F}_{\mu}(a,r)$ for all
$(a,r)\in{\mathcal X}$, where $\bar{F}_{\lambda}(a,r)=\sum_{(a,r)\leq
  (b,s)}\lambda_{b,s}$ denotes the survival function of
$\lambda$.

In analogy to the ordering of ratings in a rating
transition matrix, a minimum requirement on the transition matrix
$P=(p_{(a,r), (b,s)})$ is
\begin{equation}
  (a,r)\leq (a',r') \text{ implies } p_{(a,r)} \leq_{uo}
  p_{(a',r')} \hbox{ for every } (a,r),(a',r')\in{\mathcal X},
  \label{eq:14}
\end{equation}
where the row vectors of $P$ are denoted by $p_{(a,r)}$.
In other words, $(a,r)\leq (a',r')$ implies $\p(A_{t+1}\geq b, R_{t+1}\geq
s|A_t=a, R_t=r)\leq \p(A_{t+1}\geq b, R_{t+1}\geq s|A_t=a', R_t=r')$, for all
$(b,s)\in \mathcal X$. 
It turns out, however, this is not 
sufficient to ensure that the order of multi-year PDs is preserved. 

To this end, we will now analyse the monotonicity of the transition matrix
$P
$ with respect to the
partial orders $\leq_{st}$ and $\leq_{uo}$. First of all, it can be
shown (e.g.\ Corollary 5.2.4 of \cite{Mueller2002}) that $P$ is
stochastically monotone, i.e., $\lambda\leq_{st} \mu$ implies $\lambda
P\leq_{st} \mu P$, if and only if the following generalisation of
(\ref{231028:3}) holds:
\begin{equation}
(a,r)\leq (a',r') \hbox{ implies } p_{(a,r)} \leq_{st} p_{(a',r')}
\hbox{ for every } (a,r),(a',r')\in{\mathcal X}.  
\label{230327:3}
\end{equation}
If $\leq_{st}$ is replaced by $\leq_{uo}$, i.e., \eqref{eq:14} is
considered instead of (\ref{230327:3}), a
condition for monotonicity with respect to $\leq_{uo}$ is obtained,
which is necessary but not sufficient. In order to ensure the
$\leq_{uo}$-monotonicity of $P$, further conditions on the row vectors
of $P$ are required, which can be derived from results in
\citet{Mueller2013}.

For $(a,r)\in{\mathcal X}$ define the distribution
$1^{(a,r)}=(1^{(a,r)}_{b,s})_{(b,s)\in{\mathcal X}}$ by
$1^{(a,r)}_{a,r}=1$ and $1^{(a,r)}_{b,s}=0$ for $(b,s)\not=
(a,r)$. Note that if $P$ is monotone with respect to $\leq_{st}$ or
$\leq_{uo}$ the matrix products $P^t$ satisfy 
\begin{equation}
(a,r)\leq (a',r') \hbox{ implies } 1^{(a,r)} P^t\leq_{uo} 1^{(a',r')} P^t,
\label{231102:1}
\end{equation}
for $(a,r),(a',r')\in{\mathcal X}$ and $t\in{\mathbb N}$.
An immediate consequence of condition (\ref{231102:1}) is that
multi-year default probabilities are ordered with respect to the
initial economic state and rating. More formally, we define the
multi-year default probabilities 
\[ {\rm PD}_t(a,r)=\p(R_{t}= \overline{r}\mid A_0=a,R_0=r), \ \ \
  t\in{\mathbb N}, \] 
for initial economic state and rating $(a,r)\in {\mathcal X}$. If the
transition matrix $P$ satisfies (\ref{231102:1}) then  
\begin{equation}
(a,r)\leq (a',r') \hbox{ implies } {\rm PD}_t(a,r)\leq {\rm
  PD}_t(a',r') \hbox{ for } t\in{\mathbb N}. 
\label{231104:2}
\end{equation}

For the rest of this section we will study monotonicity properties of
the transition matrix $M\in [0,1]^{K\times K}$ and the conditional
rating migration matrices $M^{(a,b)}\in [0,1]^{J\times J}$ that imply
stochastic monotonicity of the transition matrix $P\in [0,1]^{K
  J\times K J}$. It turns out that $M$ and the $M^{(a,b)}$ have to be
stochastic monotone and, in addition, the $M^{(a,b)}$ have to satisfy
a condition formulated in terms of the following order relation:
$M^{(a,b)}\leq_{st} M^{(a',b')}$ if $m_r^{(a,b)}\leq_{st}
m_r^{(a',b')}$ for all $r\in {\mathcal R}$, where $m_r^{(a,b)}$ and
$m_r^{(a',b')}$  denote the rows of $M^{(a,b)}$ and $M^{(a',b')}$
respectively.

\begin{theorem} ~\label{230104:t1}
Assume that the transition matrix $M\in [0,1]^{K\times K}$ and the
rating migration matrices $M^{(a,b)}\in [0,1]^{J\times J}$ have the
following properties:  
\begin{enumerate}[(i)]
  \item $M$ is stochastically monotone.
  \item $M^{(a,b)}$ is stochastically monotone for all
    $a,b\in{\mathcal A}$. 
\item $a\leq a'$ and $b\leq b'$ imply
  $M^{(a,b)}\leq_{st} M^{(a',b')}$ for all $a,a',b,b'\in{\mathcal
    A}$. 
  \end{enumerate}
Then the transition matrix $P\in [0,1]^{KJ\times KJ}$ is
stochastically monotone. 
\end{theorem}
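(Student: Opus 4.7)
The plan is to verify the characterisation of stochastic monotonicity given in (\ref{230327:3}) specialised to $P$: it suffices to show that whenever $(a,r)\le (a',r')$ in the product order on $\mathcal X$, the $(a,r)$-th row $p_{(a,r)}$ is dominated in $\le_{st}$ by the $(a',r')$-th row $p_{(a',r')}$. By the increasing-sequence characterisation (\ref{231028:2}) (in the product-space version quoted right after (\ref{230327:2})), this reduces to proving that for every increasing real-valued function $x\colon \mathcal X\to\R$,
\begin{equation*}
  \sum_{b,s} m_{ab}\, m_{rs}^{(a,b)}\, x_{b,s}
  \ \le\ \sum_{b,s} m_{a'b}\, m_{r's}^{(a',b)}\, x_{b,s}.
\end{equation*}

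First I would introduce the auxiliary function
\begin{equation*}
  g(a,r,b)\ :=\ \sum_{s\in\mathcal R} m_{rs}^{(a,b)}\, x_{b,s},
\end{equation*}
so that the two sides above are $\sum_b m_{ab} g(a,r,b)$ and $\sum_b m_{a'b} g(a',r',b)$. The key intermediate claim is that $g$ is increasing in each of its three arguments (with the others fixed at a value that is already $\ge$ the other side). I would verify this in three short steps, each using one of the hypotheses, and each relying on the fact that for fixed $b$ the slice $s\mapsto x_{b,s}$ is increasing on $\mathcal R$, and for fixed $s$ the slice $b\mapsto x_{b,s}$ is increasing on $\mathcal A$:
(a) $g(a,r,b)\le g(a,r',b)$ whenever $r\le r'$, using hypothesis (ii) applied to $M^{(a,b)}$ together with (\ref{231028:2});
(b) $g(a,r',b)\le g(a',r',b)$ whenever $a\le a'$, using hypothesis (iii) with the pair $(a,b)\le(a',b)$, so that row $r'$ of $M^{(a,b)}$ is $\le_{st}$ row $r'$ of $M^{(a',b)}$;
(c) $g(a',r',b)\le g(a',r',b')$ whenever $b\le b'$, obtained by first replacing $x_{b,\cdot}$ by $x_{b',\cdot}$ (pointwise increase in $b$) and then applying hypothesis (iii) with $(a',b)\le(a',b')$.

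Combining (a) and (b) gives $g(a,r,b)\le g(a',r',b)$ for every $b$, so averaging against the single row $m_{a\cdot}$ of $M$ yields
\begin{equation*}
  \sum_b m_{ab}\, g(a,r,b)\ \le\ \sum_b m_{ab}\, g(a',r',b).
\end{equation*}
Step (c) shows that $b\mapsto g(a',r',b)$ is an increasing real-valued sequence on $\mathcal A$; since by hypothesis (i) row $a$ of $M$ is dominated by row $a'$ of $M$ in $\le_{st}$, (\ref{231028:2}) applied on $\mathcal A$ gives
\begin{equation*}
  \sum_b m_{ab}\, g(a',r',b)\ \le\ \sum_b m_{a'b}\, g(a',r',b),
\end{equation*}
and chaining the two inequalities completes the argument.

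The only potential obstacle is bookkeeping: one must be careful that step (c) really requires $x$ to be increasing in $b$ as well as in $s$, which is why the argument is carried out under the product order on $\mathcal X$ rather than one factor at a time, and that hypothesis (iii) is used twice with \emph{one} coordinate kept equal so that the comparison $M^{(a,b)}\le_{st} M^{(a',b)}$ (resp.\ $M^{(a',b)}\le_{st} M^{(a',b')}$) is legitimate. No further structural result is needed beyond the factorisation $p_{(a,r),(b,s)}=m_{ab}m_{rs}^{(a,b)}$ from Lemma~\ref{200125:l1} and the standard equivalent forms of $\le_{st}$ recalled in (\ref{231028:2}) and (\ref{230327:3}).
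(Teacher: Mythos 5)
Your proof is correct and follows essentially the same route as the paper: you split the comparison $(a,r)\le(a',r')$ into a rating step and an economic-state step, use hypothesis (ii) for the former, and for the latter observe that a certain $b$-indexed quantity is increasing (via hypothesis (iii)) so that stochastic monotonicity of $M$ can be applied, finishing with one more use of (iii). The only difference is cosmetic: you work with the increasing-function characterisation (\ref{231028:2}) of $\leq_{st}$, whereas the paper works with upper sets $U$ and their slices $U_b$; your $g(a',r',b)$ with $x=\mathbf{1}_U$ is exactly the paper's increasing sequence $x_b=\sum_{s\in U_b}m_{rs}^{(a,b)}$.
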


\begin{proof}
Let $a,a'\in {\mathcal A}$ and $r,r'\in {\mathcal R}$ such that $a\leq
a'$ and $r\leq r'$. In order to prove (\ref{230327:3}) it suffices to
show that 
\begin{equation}
\sum_{(b,s)\in U} p_{(a,r),(b,s)}\leq \sum_{(b,s)\in U} p_{(a,r'),(b,s)}
\ \ \ \hbox{ and }\ \ \ 
\sum_{(b,s)\in U} p_{(a,r),(b,s)}\leq \sum_{(b,s)\in U} p_{(a',r),(b,s)},
\label{230117:1}
\end{equation}
where $U\subseteq {\mathcal X}$ is an arbitrary upper set with respect
to the order $\leq$ on ${\mathcal X}$.  

We define the partition 
\[ U=\bigcup_{b\in {\mathcal A}} \{(b,r)\mid r\in U_b\},\ \ \ U_b=\{r\in {\mathcal R}\mid (b,r)\in U\}, \]
where each $U_b$ is an upper set in $\mathcal R$. By Lemma
\ref{200125:l1}, $p_{(a,r),(b,s)}=m_{ab} m_{rs}^{(a,b)}$.  
Hence, the left inequality in (\ref{230117:1}) follows from the
assumption that the rating migration matrices 
$M^{(a,b)}$ are stochastically monotone for all $a,b\in{\mathcal A}$:
\[ \sum_{(b,s)\in U} p_{(a,r),(b,s)}
=\sum_{b\in{\mathcal A}}  m_{ab}\sum_{s\in U_b} m_{rs}^{(a,b)}
\leq \sum_{b\in{\mathcal A}}  m_{ab}\sum_{s\in U_b} m_{r's}^{(a,b)}
=\sum_{(b,s)\in U} p_{(a,r'),(b,s)}. \]

It remains to prove the right inequality in (\ref{230117:1}). 
By definition of the product order $\leq$ on ${\mathcal X}$, $b\leq c$
implies $U_b\subseteq U_c$ for all $b,c\in {\mathcal A}$. Together
with $M^{(a,b)}\leq_{st} M^{(a,c)}$ it follows that  
\[ \sum_{s\in U_b} m_{rs}^{(a,b)}\leq \sum_{s\in U_b} m_{rs}^{(a,c)}
\leq \sum_{s\in U_c} m_{rs}^{(a,c)}. \]
Hence, by defining $x_b=\sum_{s\in U_b} m_{rs}^{(a,b)}$ we obtain a
real-valued increasing sequence $(x_b)_{b\in{\mathcal A}}$, i.e.,
$b\leq c$ implies $x_b\leq x_c$. Since $M$ is stochastically monotone
it follows from (\ref{231028:2}) that  
\begin{equation}
\sum_{(b,s)\in U} p_{(a,r),(b,s)}=
\sum_{b\in{\mathcal A}}  m_{ab}\sum_{s\in U_b} m_{rs}^{(a,b)}
\leq
\sum_{b\in{\mathcal A}}  m_{a'b}\sum_{s\in U_b} m_{rs}^{(a,b)}.
\label{240103:1}
\end{equation}
Since $M^{(a,b)}\leq_{st} M^{(a',b)}$,
\[
\sum_{b\in{\mathcal A}}  m_{a'b}\sum_{s\in U_b} m_{rs}^{(a,b)}
\leq
\sum_{b\in{\mathcal A}}  m_{a'b}\sum_{s\in U_b} m_{rs}^{(a',b)}=
\sum_{(b,s)\in U} p_{(a',r),(b,s)}.
\]
Together with (\ref{240103:1}), we obtain the right inequality in
(\ref{230117:1}). 
\end{proof}

 \section{Ratings in the Merton model}
 \label{sec:stylised-example}

To demonstrate the above and further properties of PIT and TTC
ratings, we consider a Merton firm-value model \citep{Merton1974},
where an additional drift component, reflecting the economic state,
affects the firm's profitability.\footnote{For an application to real data
using historical S\&P rating data and 
GDP growth rates, see \citep{Gera2020}.}
More specifically, consider a firm's
{\em asset value process\/} ({\em credit quality process}, {\em
  ability-to-pay process}) $(V_t)_{t\geq0}$, with
\begin{equation}
  V_{t+1} = V_t \e^{(\mu + \mu_{t+1}^A-1/2 \sigma^2) +
    \sigma \Delta W_{t+1}}, 
  \label{eq:3}
\end{equation}
where, in the standard Merton-model, $\mu,\sigma>0$ are constants and
$W=(W_t)_{t\geq 0}$ is a Brownian motion (without loss of generality, we
set $\Delta t=1$). The economic state process $A_t$, independent of
$W$ introduces an additional drift in the credit quality process with
magnitude given by the one-to-one mapping $v:\mathcal A\rightarrow
\R$, i.e., the drift is given by $\mu_t^A=v(A_t)$.

The firm is in default at time $t+1$ if $\{V_{t+1}<D_{t+1}\}$, with
$D_{t+1}$ the $\mathcal F_t$-measurable default threshold. In a Markov
setting, $D_{t+1}$ will be $\sigma(V_t, \mu_t^A)$-measurable. If one
associates with $V$ the firm's asset value, then $D$ corresponds to
the firm's debt value, with default taking place if the asset value
drops below the debt value. We give details on how the firm updates
its debt level $D$ in Assumption \ref{assumption:pit} below.

To ease notation, we write the log asset-to-debt ratio process
$Z_t=\ln(V_t/D_t)$. Because default takes place if $Z_t<0$, this
process has an interpretation as the ``distance to default''. Because
the firm updates $Z_t$ (e.g.\ by updating $D_t$), we also write
$Z_{(t+1)-}=\ln(V_{t+1}/D_t)$ to denote the random variable, which
initially at time $t$ reflects the ``distance to default'', and in
response to which the firm updates $D_t$ to $D_{t+1}$.

\subsection{Point-in-time rating}
\label{sec:pit-rating}

At any time, a firm's PIT rating is determined by its probability of
default. The non-default rating classes $r_1, \ldots, r_{J-1}$
are specified via PD buckets $(d_{j-1}, d_j]$, $j=1, \ldots J-1$,
with PD boundaries $\{d_0=0, d_1, \ldots, d_{J-1}=1\}$. In
other words, rating $r_j$ is assigned at time $t$ if
the firm's probability of default at time $t$ is in $(d_{j-1},
d_j]$. To ensure that the 
overall Markov process is time-homogeneous, we assume that the firm
makes adjustments to its capital structure over time. This assumption
is justified e.g.\ by \citep{Lemmon2008} who observe that capital
structures and leverage ratios tend to remain stable over time.
In the following, we assume that the firm is not in default at time
$t$.
\begin{assumption}
 \label{assumption:pit}
 At time $t$, the initial debt level is $D_t$. Denote
  $Z_{(t+1)-} = \ln(V_{t+1}/D_t)$. The rating class $r_j$ at time $t$
  is determined from
  $\p(Z_{(t+1)-}< 0|Z_t, \mu_t^A)\in (d_{j-1}, d_j]$. 
  Associated
  with each rating class is a PD $p_j\in (d_{j-1}, d_j]$,
  $j=1,\ldots, J$. Based on the rating class $r_j$, the firm chooses
  its debt level $D_{t+1}$ such that $\p(Z_{t+1}<0|Z_t, \mu_t^A)=p_j$.
\end{assumption}

By construction, a PIT rating is fully determined by a firm's PD, so a
firm with rating $r_j$ has 
a probability of default of $p_j$, regardless of
the state of the economy. Formally,
\begin{equation*}
  R_t=r_j \quad \iff \quad \p(Z_{(t+1)-}<0|Z_t, \mu_t^A)\in
  (d_{j-1},d_j] \quad\iff\quad \p(Z_{t+1}<0|Z_t, \mu_t^A)=p_j.
\end{equation*}
This is how PIT ratings are {\em assigned}. The following proposition
gives details about the rating process obtained by this construction.

\begin{proposition}~
  \begin{enumerate}[(i)]
  \item At time $t$, conditional on no-default, i.e., $R_t\in
    \widehat{\mathcal R}$, which is equivalent to $\{Z_t\geq 0\}$, the
    firm's initial probability of default is given by
    \begin{equation*}
      \p(Z_{(t+1)-}<0|Z_t, \mu_t^A)
      = \sum_{a\in \mathcal A} \Ncdf\left(-\frac{Z_t  + \mu
          +  v(a) - \frac{1}{2}\sigma^2 }
        {\sigma}\right) \p(\mu_{t+1}^A=v(a)|\mu_t^A),
    \end{equation*}
    where $\Ncdf$ denotes the standard normal distribution function. 
  \item For each economic state, there exist asset-to-debt 
    ratios $\alpha_1, \ldots, \alpha_{J-1}$, determined by solving,
    for each rating class $r_j$,
    \begin{multline}
      \label{eq:7}
      \p(R_{t+1}=\overline r|R_t=r_j,A_t) =p_j
      = \sum_{a\in \mathcal A}
      \Ncdf\left(-\frac{\ln(\alpha_j) + \mu + v(a) -
          \frac{1}{2}\sigma^2}{\sigma}\right) \p(\mu_{t+1}^A=v(a)|\mu_t^A),
    \end{multline}
    where $p_j$ is the default probability associated
    with rating $r_j$. The asset-to-debt ratio correspond to
    $\alpha_j=V_t/D_{t+1}$, $j=1,\ldots, J-1$.  In
    particular, the left equation implies that the rating is a PIT
    rating, see Definition \ref{def:PIT}.
  \item There exist rating boundaries
    $c_{J+1}=-\infty < c_J=0 < c_{J-1} < \cdots < c_1=\infty$
    depending on $A_t$, $A_{t+1}$ and $R_t$ such
    that the conditional rating transition probabilities $m_{r\, 
      r_j}^{(a,b)}$ are given 
    by
    \begin{equation*}
      m_{r\, r_j}^{(a,b)} = \p(R_{t+1}=r_j |R_t=r, A_t=a, A_{t+1}=b
      ) =
      \Ncdf(c_j)-\Ncdf(c_{j+1}),\quad j=1,\ldots, J.
    \end{equation*}
    Using that the asset-to-debt ratio is $\alpha$ given $R_t=r$ and
    $A_t=a$, the boundaries are
        \begin{align}
      c_{j+1}&= \frac{\min(\tilde c_{j+1},0) -\ln(\alpha)-
               \mu-v(b) + 1/2\sigma^2}{\sigma}\label{eq:8}\\
      c_j &= \frac{\tilde c_j -\ln(\alpha) - \mu-v(b) +
            1/2\sigma^2}{\sigma},\label{eq:12}
    \end{align}
    where $\tilde c_j, \tilde c_{j+1}$, depending on $A_t=a$, are such
    that
    \begin{equation*}
      \p(Z_{(t+1)-} <0 | Z_{t},\mu_{t}^A=v(a))\in (d_{j-1}, d_j] \quad
      \iff\quad Z_{t} \in [\tilde c_{j+1}, \tilde c_j). 
    \end{equation*}
  \item The process $(R_t, A_t
    )$ is a time-homogeneous Markov
    process.
  \end{enumerate}
\end{proposition}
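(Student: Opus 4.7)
The plan is to exploit the conditional Gaussianity of the log-increments of $V$ and the Markov structure of $(A_t)$; parts (i) and (ii) are direct computations, (iii) requires bookkeeping across PD buckets, and (iv) then follows with little extra work.

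For (i), rearranging \eqref{eq:3} gives $Z_{(t+1)-}=Z_t+(\mu+\mu_{t+1}^A-\tfrac{1}{2}\sigma^2)+\sigma\Delta W_{t+1}$. Since $W$ and $A$ are independent and $Z_t$ is measurable with respect to the joint history through time $t$, conditional on $Z_t$ and on $\mu_{t+1}^A=v(a)$ the random variable $Z_{(t+1)-}$ is normal with mean $Z_t+\mu+v(a)-\tfrac{1}{2}\sigma^2$ and variance $\sigma^2$. Summing over $a\in\mathcal A$ with the Markov weights $\p(\mu_{t+1}^A=v(a)|\mu_t^A)$ yields the formula. For (ii), fix $A_t$ and view the right-hand side of \eqref{eq:7} as a function $F(\alpha)$ of $\alpha>0$; each summand is continuous and strictly decreasing in $\alpha$ with $F(\alpha)\to 1$ as $\alpha\downarrow 0$ and $F(\alpha)\to 0$ as $\alpha\to\infty$, so the intermediate value theorem supplies a unique $\alpha_j>0$ with $F(\alpha_j)=p_j$. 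Setting $D_{t+1}:=V_t/\alpha_j$ then realises a one-step default probability of exactly $p_j$ by the same computation as in (i); since $p_j$ does not depend on $A_t$, Definition \ref{def:PIT} is satisfied.

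The substantive work is (iii). Conditional on $R_t=r$ and $A_t=a$, part (ii) fixes $\alpha=V_t/D_{t+1}$; conditional additionally on $A_{t+1}=b$, the variable $Z_{t+1}=\ln\alpha+\mu+v(b)-\tfrac{1}{2}\sigma^2+\sigma\Delta W_{t+1}$ is Gaussian with mean $m_b:=\ln\alpha+\mu+v(b)-\tfrac{1}{2}\sigma^2$ and variance $\sigma^2$. By Assumption \ref{assumption:pit}, the event $\{R_{t+1}=r_j\}$ for $j<J$ equals $\{\p(Z_{(t+2)-}<0|Z_{t+1},\mu_{t+1}^A=v(b))\in(d_{j-1},d_j]\}$, while $\{R_{t+1}=\overline r\}=\{Z_{t+1}<0\}$. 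Part (i) applied one step forward shows that the next-period PD is a continuous, strictly decreasing function of $Z_{t+1}$ (with coefficients depending on $b$), so each PD bucket pulls back to a $Z_{t+1}$-interval; ordering buckets from best to worst gives thresholds $\tilde c_{J+1}=-\infty<\tilde c_J=0<\tilde c_{J-1}<\cdots<\tilde c_1=\infty$ with $\{R_{t+1}=r_j\}=\{Z_{t+1}\in[\tilde c_{j+1},\tilde c_j)\}$. The $\min(\tilde c_{j+1},0)$ in \eqref{eq:8} simply unifies the default case $j=J$ (where $\tilde c_{J+1}=-\infty$) with the non-default cases $j<J$ (where $\tilde c_{j+1}\geq 0$). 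Standardising $Z_{t+1}$ by subtracting $m_b$ and dividing by $\sigma$ maps $[\tilde c_{j+1},\tilde c_j)$ to $[c_{j+1},c_j)$ with $c_j,c_{j+1}$ as in \eqref{eq:12}, \eqref{eq:8}, and its Gaussian probability is $\Ncdf(c_j)-\Ncdf(c_{j+1})$.

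For (iv), the transition probabilities derived in (iii) depend only on $(R_t,A_t)$ (through $\alpha$) and on $A_{t+1}$, neither on earlier history nor on $t$, since the Merton dynamics are stationary; combined with the time-homogeneous Markov property of $(A_t)$, this yields the time-homogeneous Markov property of $(R_t,A_t)$. The main obstacle is the monotone correspondence in (iii) between $Z_{t+1}$-values and next-period PDs, together with the careful treatment of the default threshold at $Z=0$; once these are in hand, the rest is routine Gaussian calculation.
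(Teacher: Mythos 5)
Your proposal is correct and follows essentially the same route as the paper's own proof: Gaussian conditioning for (i), monotonicity of the mixture in $\alpha$ for (ii), pulling the PD buckets back to $Z_{t+1}$-intervals via invertibility and then standardising for (iii), and state-dependence of the transition probabilities for (iv). The only differences are cosmetic — you make the existence argument in (ii) explicit via the intermediate value theorem where the paper only invokes uniqueness from strict monotonicity, and your gloss on the $\min(\tilde c_{j+1},0)$ truncation (restricting to no default at $t+1$) matches the paper's handling of that term.
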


\begin{proof}
 (i) The claim is obtained from
    $Z_{(t+1)-} = Z_t + \mu + \mu_{t+1}^A -
    \frac{1}{2}\sigma^2 + \sigma\Delta W_{t+1}$ and taking expectation
    with respect to $\mu_{t+1}^A$.

\medskip
\noindent
(ii) By construction $R_t=r_j$ if and only if 
    $\p(Z_{t+1}<0|Z_t,\mu_t^A)=p_j$.
    Since the asset-to-debt ratio $V_t/D_{t+1}$ is
    $(Z_t,\mu_t^A)$-measurable, $V_t/D_{t+1}$ may be replaced by 
    $\alpha_j$, so that
    \begin{align*}
      Z_{t+1} &= \ln(V_t/D_{t+1}) + \mu + \mu_{t+1}^A - 1/2\sigma^2 +
                \sigma\Delta W_{t+1}\\
              &= \ln(\alpha_j) + \mu + \mu_{t+1}^A - 1/2\sigma^2 +
                \sigma\Delta W_{t+1}.
    \end{align*}
    The left equation of \eqref{eq:7} holds by
    definition of the rating classes, while the right equation holds
    by construction of the PIT rating system on the firm-value process.
    It remains to
    observe that the sum in \eqref{eq:7} is a mixture of functions
    strictly monotone in $\alpha_j$; hence, $\alpha_j$ is
    uniquely determined given $p_j$. 
    \medskip
    \noindent
    (iii)
If $j=J$, then
    \begin{equation*}
      \p(R_{t+1}=\overline{r}|R_t, \mu_t^A, \mu_{t+1}^A) = \p(Z_{(t+1)-}<0 |R_t,
      \mu_t^A, \mu_{t+1}^A), 
    \end{equation*}
    so $c_{J+1}=-\infty$ and $c_J=0$.  In the case $j\not=J$, we have
    \begin{multline}
      \label{eq:5}
      \p(R_{t+1}=r_j|R_t, A_t, A_{t+1}
      )\\ %
      \begin{aligned}
        &= \p\left(\p(Z_{t+2}<0|Z_{t+1},\mu_{t+1}^A)=p_j,
          Z_{t+1}>0\big| Z_t, \mu_t^A, \mu_{t+1}^A\right) \\ %
        &= \p\left(\p(Z_{(t+2)-} <0 | Z_{t+1},\mu_{t+1}^A)\in
          (d_{j-1}, d_j], Z_{t+1}>0\big| Z_t, \mu_t^A,
          \mu_{t+1}^A\right).
      \end{aligned}
    \end{multline}
    On $\{Z_{t+1}\geq 0\}$, holding $\mu_{t+1}^A$ fixed, the PD
    $\p(Z_{(t+2)-}<0 |Z_{t+1}, \mu_{t+1}^A)$ is invertible in
    $Z_{t+1}$, since from (i):
    \begin{equation}
      \label{eq:5}
      \p(Z_{(t+2)-}<0 |Z_{t+1}, \mu_{t+1}^A) %
      = \sum_{k=1}^K \Ncdf\left(-\frac{Z_{t+1} + \mu + \mu_{t+2}^A -
          \frac{1}{2}\sigma^2} {\sigma}\right)\,
      \p(\mu_{t+2}^A=v(a_k)|\mu_{t+1}^A),  
    \end{equation}
    which is a mixture of functions strictly monotone in
    $Z_{t+1}$. Hence, there exist boundaries
    $\tilde c_{j+1}, \tilde c_j$ dependent on $\mu_{t+1}^A$ such that,
    \begin{equation*}
      \p(Z_{(t+2)-} <0 | Z_{t+1},\mu_{t+1}^A)\in (d_{j-1}, d_j] \quad
      \iff\quad Z_{t+1} \in [\tilde c_{j+1}, \tilde c_j). 
    \end{equation*}
    Inserting in \eqref{eq:5} and using that, if $R_t=r$,
    \begin{equation*}
      Z_{t+1} =\ln(\alpha) + \mu + \mu_{t+1}^A - 1/2\sigma^2 +
      \sigma\Delta W_{t+1}, 
    \end{equation*}
    gives 
    \begin{multline*}
      \p(R_{t+1}=r_j|R_t=r, A_t=a, A_{t+1}=b) \\%
      \begin{aligned}
        &= \p(Z_{t+1}\in [\min(\tilde c_{j+1},0), \tilde
        c_j)|V_t/D_{t+1}=\alpha, \mu_t^A=v(a), \mu_{t+1}^A=v(b)) \\
        &= \p(\Delta W_{t+1} \in [c_{j+1}, c_j)),
      \end{aligned}
    \end{multline*}
    where $c_{j+1}$ and $c_j$ are given by \eqref{eq:8} and
    \eqref{eq:12}, respectively. 
 \medskip
\noindent
(iv) It is a consequence of Assumption
    \ref{assumption:pit} and parts (ii), (iii) that the rating
    transition probabilities do not depend on time or on the history
    of the asset-to-debt ratio. 
\end{proof}

\subsection{Through-the-cycle rating}
\label{sec:through-cycle-rating-1}

The PIT rating construction implies that, when observing the
economic state, it is equivalent to observe a firm's credit
quality or rating process. The Markov property is secured
via Assumption \ref{assumption:pit}, which links the credit quality 
(via the asset-to-debt ratio) to the firm's rating class.

Conditional TTC rating changes are determined independent of the
economic state, for example due to credit quality
changes idiosyncratic to the firm. As a consequence, building a TTC
rating on the process \eqref{eq:3} following Assumption 
\ref{assumption:pit} will, in general, be non-Markovian, as observing
the history of economic state transitions will reveal additional
information on a firm's probability of default.
Hence, a similar construction for a TTC rating requires different
assumptions on the firm's activities: achieving the Markov property 
requires the firm to eliminate credit quality information from
systematic changes (i.e., the economic state) at each point in time.
This stronger assumption will, in general, 
lead to a time-inhomogeneous Markovian PIT rating. 
Depending on the economic state, one would have to adjust rating
boundaries to yield probabilities of default that are independent of
the economic state.  

Not only is the stronger assumption suggested above economically
implausible, but from a practical point of view, it is not desirable to
have two disconnected rating systems yielding entirely different
rating classes. Rather, it is reasonable to build on the PIT property, 
i.e., the connection between rating class and probability of
default, while enforcing the TTC property. 

As a consequence, we shall stick to the construction of the
PIT rating, in particular Assumption \ref{assumption:pit}, and
establish the closest TTC rating by minimising the discrepancy between
TTC and PIT default term structures, conditional on economic states
and PIT ratings. This provides a feasible construction of a TTC rating
from a given PIT rating, even without the explicit PIT rating
construction from the previous section. 

In the following, we assume throughout that the number of rating
classes are the same for both PIT and TTC ratings. 

\begin{definition}
  A firm's {\em default term structure\/} at time $0$ 
  is defined as
  \begin{equation*}
    t\mapsto p_t:=
    \p(R_t=\overline r|R_{t-1}\not= \overline r), \quad t>0. 
  \end{equation*}
\end{definition}
PDs entering the default term structure are
calculated as follows:  with $\lambda_0$ the initial
distribution of rating and economic state, $\lambda_t=\lambda_0 \cdot
P^t$ is the time-$t$ distribution, where $P$ is the rating
transition matrix given by \eqref{eq:4}. 
The time-$t$ default probability is given by $\p(R_t=\overline
r)=\sum_{a\in \mathcal A} \lambda_{t,(a,\overline r)}$, where
$\lambda_{t,(a,r)}$, $a\in \mathcal A$, $r\in \mathcal R$, denote the
entries of $\lambda_t$.  

The construction of the TTC rating could be done in one step, but due
to the computational burden, we break it down into two steps:
\begin{enumerate}[(I)]
\item Define the TTC $Q$ matrix (see Definition \ref{def:QTTC}), which
  embodies the non-default rating transitions and is independent of
  the economic state.           
\item Determine the TTC $P$ matrix and TTC initial distributions
  conditional on economic state and PIT rating. 
\end{enumerate}
The two steps are now described in detail: \\

(I). There are different plausible ways to build a TTC's rating $Q$
matrix: 
one can average the PIT rating's $Q^{(a,b)}$ matrices, which depend on
the economic state transitions. The averaging may be weighed according
to the probabilities of economic state transitions:
\begin{equation*}
  \p( A_{t+1}=b, A_t=a) = \p(A_{t+1}=b | A_t=a) \, \p(A_t=a) = m_{ab}
  \cdot \mu_a^A,\quad a,b\in \mathcal A,
\end{equation*}
where $\mu^A$ denotes the stationary distribution of the economic
states.

Another method -- the one used in the numerical example below -- would
use the PIT ratings' asymptotic $Q$ matrix, which is
denoted by $\tilde Q$ and derived from $\tilde P_\infty$, see Equations
\eqref{eq:2} and \eqref{240227:2}. 
\medskip

(II). From Equation \eqref{eq:10},
a TTC rating transition matrix is characterised by 
\begin{equation}
M^{(a,b)} = D^{(a,b)}\cdot Q, \quad a, b\in \mathcal A. 
\label{eq:6}
\end{equation}
It remains to specify the default 
probabilities $m_{r_j\overline r}^{(a,b)}$, $j=1, \ldots, J-1$, of the
default component matrix $D^{(a,b)}$.
The goal is to determine the default 
probabilities by calibrating default-term structures for each
combination of PIT rating and economic state. This requires
calibrating probabilities of the initial TTC rating for each of these
combinations as well. More specifically, for each initial state
$(a,r)$, the initial probabilities under the TTC rating,
$\lambda_{(a,r)}^{\text{TTC}}\in [0,1]^{KJ}$, are determined by
calibrating the default term structure conditional on the economic
state $a$ and the PIT rating $r$. In addition, 
default-term structures are fitted where only the economic state is
given and the initial PIT rating is specified by the stationary
distribution of the non-default states. The corresponding
TTC default-term structures (for each economic state) are also
determined from the TTC rating's stationary distributions.
Formally, the calibration problem is specified as 
\begin{equation}
  \label{eq:1}
  \min_{\{m_{r\overline r}^{(a,b)},
    \lambda_{(a,r)}^{\text{TTC}}| r\in
    \widehat{\mathcal R}, a,b\in \mathcal A\}} %
  \left(\sum_{r\in \widehat{\mathcal R}}\sum_{a\in \mathcal A}\sum_{t\in \mathcal T}
    \left(p_{t,(a,r)}^{\text{TTC}}-p_{t,(a,r)}^{\text{PIT}}\right)^2 +
    \sum_{a\in \mathcal A} \sum_{t\in \mathcal T} (p_{t,a}^{\text{TTC}}
    - p_{t,a}^{\text{PIT}})^2\right)^{1/2}. 
\end{equation}
As outline above, $(p_{t,(a,r)}^{\text{PIT}})_{t\in \mathcal T}$
and $(p_{t,(a,r)}^{\text{TTC}})_{t\in \mathcal T}$ as well as
$\lambda_{(a,r)}^{\text{TTC}}$ refer to default
term structures, resp.\ an initial distribution, conditional on
economic state $a$ and PIT rating $r$, whereas $(p_{t,a}^{\text{TTC}})_{t\in
  \mathcal T}$ and $(p_{t,a}^{\text{PIT}})_{t\in \mathcal T}$ refer to
default term structures where the initial rating distribution is
determined from the respective stationary distribution conditional on
economic state $a\in \mathcal A$. 
Constraints of the optimisation problem ensure that the initial
distribution is a probability measure, that the matrix
$P^{\text{TTC}}$ is a stochastic matrix and that PDs of default term
structures with varying initial distributions are well-ordered.

\subsection{Example}
\label{sec:example}

\begin{table}[h]
  \centering
  \begin{tabular}{cc}
    Parameter& Value\\\hline
    \multicolumn{2}{c}{Merton model}\\\hline
    $\mu$ & 0.003\\
    $\sigma$ & 0.006\\
    $\Delta t$ & 1\\\hline
    \multicolumn{2}{c}{Economic states}\\\hline
    \# economic states & 3\\
    Economic state transition matrix $M$ &
                                       $\begin{pmatrix}
                                         0.8 & 0.175 & 0.025\\
                                         0.1 & 0.8 & 0.1\\
                                         0.025 & 0.175 & 0.8
                                       \end{pmatrix}$\\
    Economic state drift $\mu^A$& $\{\sigma, 0, -\sigma\}$\\\hline
    \multicolumn{2}{c}{Rating classes}\\\hline
    \# non-default rating classes & 3\\
    PIT default probabilities & $p\in \{0.0002, 0.005, 0.025\}$\\
    PIT rating boundaries & $d\in \{0, 0.0003, 0.02, 1\}$\\\hline
  \end{tabular}
  \caption{Parameters used in the example.}
  \label{tab:parameters}
\end{table}
This section illustrates PIT and TTC ratings determined from a Merton
model as outlined in the previous sections. The parameters used are
shown in Table \ref{tab:parameters}.
\begin{figure}[h]
  \centering
  \includegraphics[width=.45\textwidth]{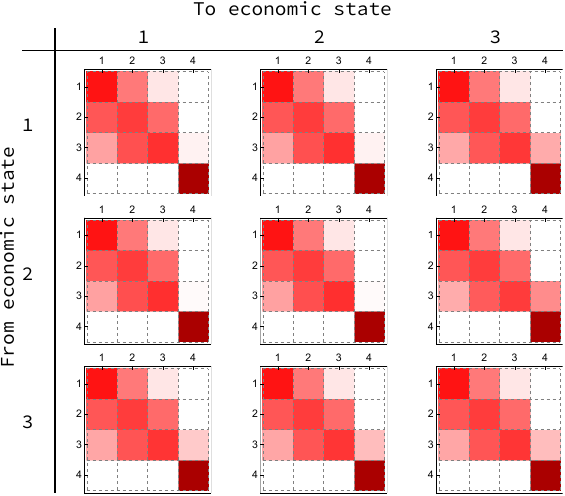}\ \ \
  \ \ \
  \includegraphics[width=0.4\textwidth]{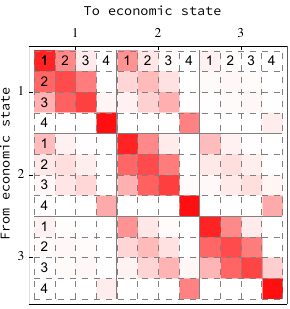} \ \ \ 
  \caption{$M^{(a,b)}$ matrices (left) and $P$ matrix
    (right) of TTC rating. 
    Economic states are labelled 1 (good), 2 (neutral) and 3
    (bad). Darker shades indicate values closer to one, whereas white
    indicates a value of zero or close to zero.}
  \label{fig:rating_example_ttc}
\end{figure}

Figure \ref{fig:rating_example} in Section
\ref{sec:example-rating-syst} shows the PIT $M^{(a,b)}$ matrices and
$P$ matrix. Figure \ref{fig:rating_example_ttc} shows the
corresponding matrices of the TTC rating.  
The $M^{(a,b)}$ matrices reveal that, as expected, rating transitions
in the PIT rating depend more strongly on the economic state
transition, while transitions to non-default ratings of the TTC system
vary little across economic state transitions. The PDs of TTC rating
classes (last column in each matrix) show variation across the
economic states. 
It is easily verified that the $P$
matrices are monotone with respect to the upper orthant order. They
are even stochastically monotone, so powers are monotone with respect
to the upper orthant order as well.

\begin{figure}[t]
  \centering
  \includegraphics[width=0.75\textwidth]{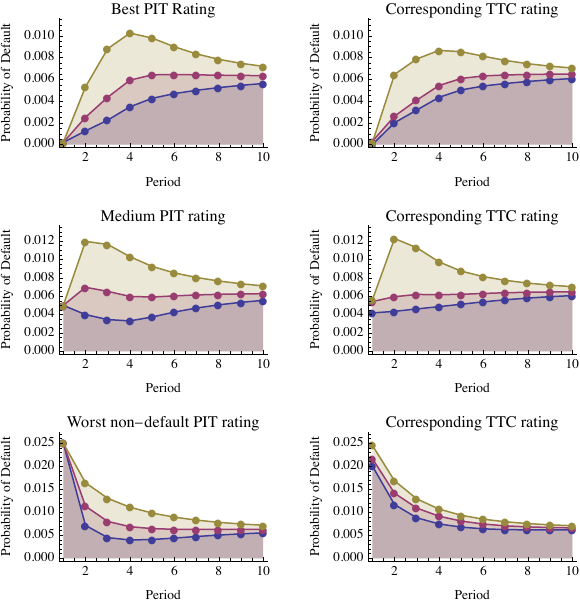}
  \caption{Default curves generated by PIT (left) and TTC (right)
    ratings. Left, from top to bottom: initial PIT ratings are ordered
    from best to worst non-default rating; each curve denotes the
    economic state in which the curve starts (blue: good, red: neutral,
    green: bad). Right: Corresponding TTC ratings, where initial
    distributions $\lambda^{\text{TTC}}_{(a,r)}, r\in \mathcal{\hat R},
    a\in \mathcal A$, are calibrated to respective initial PIT rating,
    see Equation (\ref{eq:1}).
  }
  \label{fig:defaultCurves1}
\end{figure}
In the following, we compare various default curves generated by PIT
and TTC ratings. The left-hand side of Figure \ref{fig:defaultCurves1}
shows default curves for all combinations of initial PIT non-default
ratings and economic states. The right-hand side shows the
corresponding calibrated TTC ratings. Regardless of the economic
state, the PIT one-period PD depends only on the initial PIT
rating. This is not the case for TTC ratings, although default
probabilities are close, given that each TTC default curve represents
firms in one PIT rating. The default curves converge at the long end
across ratings and economic states. Because TTC rating classes are
more stable, due to the constant $Q$ matrix, convergence tends to
occur faster. 

\begin{figure}[t]
  \centering
  \includegraphics[width=\textwidth]{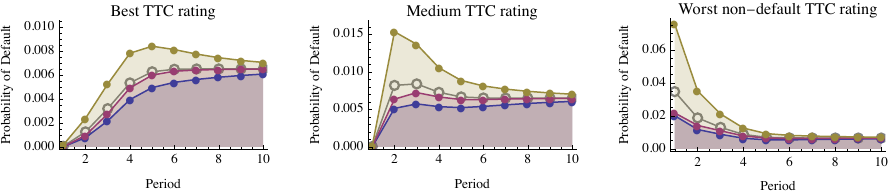}
  \caption{Default curves generated by TTC ratings. Colour-coded
    curves denote the initial economic
state (blue: good, red: neutral, green: bad). The grey curve denotes
the default probabilities of the asymptotic approximation
\eqref{eq:2}, which corresponds to a TTC rating that has no
economic-state information at all.}
  \label{fig:defaultCurves2}
\end{figure}
Figure \ref{fig:defaultCurves2} shows default curves for every
combination of initial TTC rating and initial economic state. In
each graph, the curves are again ordered according to the initial
economic state (blue: good, red: neutral, green: bad). As the
one-period PDs for the best and medium TTC ratings are practically
zero, regardless of the economic state, a firm's rating path will
nearly certainly pass through the worst TTC rating before it
defaults.

Also shown in Figure \ref{fig:defaultCurves2} (in grey) are default
curves derived from the asymptotic approximation \eqref{eq:2} of the
TTC rating. As before, (conditional) rating migrations are insensitive
to the economic state, but in addition, this is now also the case for 
PDs. These PDs meet regulatory capital requirements by being 
based on the observed historical average one-year default rate,
\citet{BIS2016a}. It can be seen that these PDs over- or underestimate
real PDs depending on the economic state. 

\begin{figure}[t]
  \centering
  \includegraphics[width=.75\textwidth]{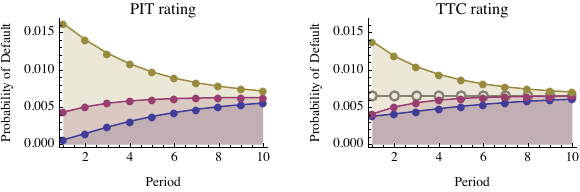}
  \caption{Default curves generated by PIT and TTC ratings where only
    the initial economic state is known. The initial distributions of
    the PIT and TTC ratings are determined from the stationary
    distribution of $P$. Colour-coded curves denote the initial
    economic state (blue: good, red: neutral, green: bad). The grey
    curve denotes the default probabilities of the asymptotic
    approximation \eqref{eq:2}, which corresponds to a TTC rating that
    has no economic-state information at all.}
  \label{fig:defaultCurves3}
\end{figure}
Finally, Figure \ref{fig:defaultCurves3} shows default curves of PIT
and TTC ratings, where only the initial economic state is known. The
initial distributions are chosen as the stationary distributions,
which are determined from the $P$ matrices with default states
removed, and then conditioned on each economic state. The curves can
be interpreted as representing -- conditional on the economic state --
the default probabilities of a randomly picked firm. As such, ideally,
the curves would be equal, as the default probabilities of a randomly
picked firm should not depend on a particular rating philosophy.
However, as can be seen, with the TTC rating being ``more stable'', it
is harder to differentiate the neutral and bad economic states than in
the case of the PIT ratings where the initial default probabilities
are given.

The grey curve in Figure \ref{fig:defaultCurves3} shows the default
curve of the TTC rating's asymptotic approximation \eqref{eq:2}. Given
that there is no initial information about the firm and the economic
state, these PDs remain constant through time. 

\section{Conclusion}
\label{sec:conclusion}
To conclude, we point out how the rating framework introduced in this
paper complies with requirements of ratings and PDs
in both accounting and regulatory capital standards. 

The International Financial Reporting Standard
(IFRS) 9 as well as the accounting standard Current Expected Credit
Losses (CECL) require financial institutions to calculate expected
credit losses that reflect current economic conditions and forecasts
of future economic conditions\footnote{International Financial
  Reporting Standard (IFRS) 9, Paragraph 5.5.17}. Hence, default
probabilities that are used for calculating expected credit losses for
accounting purposes have to be realistic estimates of the
actual probability of default taking the current economic state as
well as the future (stochastic) development of economic conditions
into account. As the accounting
standards do not contain any requirements on rating systems, even
default probabilities of a rating process 
$(R_t)_{t\in \Nzero}$ that is predominantly TTC satisfies the accounting
requirements as long as the PD curve ${\p(R_t=\overline{r})}_{t\in \Nzero}$
specified by $(R_t)_{t\in \Nzero}$
provides a realistic estimate of the actual probabilities of
default. This requirement typically implies that default probabilities 
$\p(R_{t+1}=\overline{r}|R_t=r)$ show a strong dependence on economic
states to make up for the low sensitivity of non-default ratings to
economic conditions if $(R_t)_{t\in \Nzero}$ is predominantly TTC.

The PD curve ${\p(R_t=\overline{r})}_{t\in \Nzero}$ can be derived
from the time-inhomogeneous transition matrices ${\widetilde P}_t$ via
Equation (\ref{240210:1}). In our setup, the transition matrices
${\widetilde P}_t$ depend on the probabilities of the different
economic paths $(a_0,a_1,\ldots, a_t)\in{\mathcal A}^{t+1}$ specified
by the Markov process $(A_t)_{t\in \Nzero}$. In practice, the
specification of economic scenarios $(a_0,a_1,\ldots, a_T)$ is based
on actual macro-economic forecasts, which provide the economic
information required for the transition matrices  ${\widetilde
  P}_1,\ldots,{\widetilde P}_T$ up to a forecast horizon $T$. For
extending the default curve beyond $T$, the asymptotic transition
matrix ${\widetilde P}_{\infty}$ is typically used, which formalizes
the long-term average migrations of the rating process, see Equation
\eqref{eq:2}.
\enlargethispage{\baselineskip}

In contrast to accounting requirements,
financial regulators wish to avoid procyclical capitalisation of financial
institutions, as this would incentivise firms to reduce their capital
when the economy thrives, which in turn would require firms to increase
their capital in economic downturns, when this is particularly difficult
to achieve. 
As this procyclical capitalisation
poses a threat to financial stability, financial regulators
prefer banks to assess regulatory capital requirements for credit risk
using through-the-cycle (TTC) 
ratings,\footnote{Capital requirements regulation and directive --
CRR/CRD IV, Article 502} which are designed to be stable through the
business cycle.
In addition, default probabilities in regulatory capital requirement
calculations should be based on the observed 
historical average one-year default rate, see \citet{BIS2016a}. In
order to satisfy these requirements in our setup, consider a rating
migration process $(R_t)_{t\in \Nzero}$ that is predominantly TTC.
The asymptotic PDs of $(R_t)_{t\in \Nzero}$ correspond to the historical
average one-year default rate of this rating migration process.
Hence, 
the asymptotic approximation $(S_t)_{t\in \Nzero}$ based on the
transition matrix ${\widetilde P}_{\infty}$ (see Definition 
\ref{230102:d1}) is a rating
migration process that is consistent with the regulatory requirements
listed above. However, as the example in Section \ref{sec:example}
demonstrates, $(S_t)_{t\in \Nzero}$ does
not provide realistic PD estimates. In particular, these PDs -- shown
in Figures \ref{fig:defaultCurves2} and \ref{fig:defaultCurves3} -- 
tend to underestimate the actual default risk in a negative economic
environment. Alternatively, the asymptotic default probabilities specified by
${\widetilde P}_{\infty}$ could be replaced by default probabilities
taken from a conditional transition matrix $M^{(a,b)}$, where the
choice of the economic states $a,b\in {\mathcal A}$ reflects the
underlying economic assumptions. For example, if the transition from
$a$ to $b$ represents a negative economic scenario the default
probabilities in $M^{(a,b)}$ could be interpreted as stressed PDs. If
default probabilities are specified in this way the underestimation of
the actual default risk could be avoided, even in a negative economic
environment, but at the expense of overly conservative PD estimates
during benign economic times.

\ifaap
\bibliographystyle{APT}
\footnotesize
\else
\setlength{\bibsep}{4pt plus 0.3ex}
\bibliographystyle{abbrvnamed} %
\fi
\bibliography{finance} %
\end{document}